
\documentclass[twoside,leqno,twocolumn]{article}
\usepackage{ltexpprt}
\usepackage[algo2e, ruled, vlined, nofillcomment, linesnumbered]{algorithm2e}
\usepackage{amsopn}
\usepackage{mathtools}
\usepackage{amsmath}
\usepackage{float}
\usepackage{balance}

\usepackage{xspace}
\usepackage{amsfonts}
\usepackage{cite}

\newtheorem{problem}{Problem}
\newtheorem{observation}{Observation}

\newcommand{\spara}[1]{\smallskip\noindent{\bf #1}}

\newcommand{\graph}{\ensuremath{G}\xspace}
\newcommand{\nodes}{\ensuremath{V}\xspace}
\newcommand{\edges}{\ensuremath{E}\xspace}
\newcommand{\nnodes}{\ensuremath{n}\xspace}
\newcommand{\nedges}{\ensuremath{m}\xspace}
\newcommand{\seed}{\ensuremath{s}\xspace}
\newcommand{\rep}{\ensuremath{R}\xspace}
\newcommand{\nrep}{\ensuremath{k}\xspace}
\newcommand{\path}{\ensuremath{p}\xspace}
\newcommand{\pathtwo}{\ensuremath{q}\xspace}

\newcommand{\cost}{\ensuremath{c}\xspace}
\newcommand{\tree}{\ensuremath{T}\xspace}
\newcommand{\exclpath}{\ensuremath{Q}\xspace}
\newcommand{\excldist}{\ensuremath{d}\xspace}
\newcommand{\extexclpath}{\ensuremath{Q^{+}}\xspace}
\newcommand{\extexcldist}{\ensuremath{d^{+}}\xspace}

\newcommand{\ts}{\ensuremath{\mathcal{T}}\xspace}
\newcommand{\tmin}{\ensuremath{t_0}\xspace}
\newcommand{\bigO}{\ensuremath{\mathcal{O}}\xspace}
\newcommand{\terminals}{\ensuremath{X}\xspace}

\newcommand{\nphard}{{{\np}-hard}\xspace}
\newcommand{\set}[1]{\left\{#1\right\}}
\newcommand{\floor}[1]{\left\lfloor{#1}\right\rfloor}
\newcommand{\pr}[1]{\left(#1\right)}

\newcommand{\abs}[1]{{\left|#1\right|}}

\newcommand{\real}{\mathbb{R}}
\newcommand{\np}{\textbf{NP}\xspace}

\newcommand{\define}{\leftarrow}

\newcommand{\Ourproblem}{{\sc Order\-ed\-Steiner\-Tree}\xspace}
\newcommand{\routedproblem}{{\sc Order\-ed\-Steiner\-Tree}\ensuremath{(\seed)}\xspace}
\newcommand{\STproblem}{{\sc Steiner\-Tree}\xspace}

\newcommand{\bfs}{\textsc{bfs}\xspace}
\newcommand{\algclosure}{\textsc{closure}\xspace}
\newcommand{\algmst}{\textsc{delayed-bfs}\xspace}
\newcommand{\alggreedy}{\textsc{greedy}\xspace}

\newcommand{\algone}{\algclosure}

\newcommand{\tbfs}{\algmst}
\newcommand{\baseline}{{\sc steiner}\xspace}

\newcommand{\si}{\textsc{si}\xspace}
\newcommand{\seir}{\textsc{seir}\xspace}
\newcommand{\mdl}{\textsc{mdl}\xspace}
\newcommand{\ic}{\textsc{ic}\xspace}
\newcommand{\ct}{\textsc{ct}\xspace}
\newcommand{\spath}{\textsc{sp}\xspace}

\newcommand{\grqc}{\textit{grqc}\xspace}
\newcommand{\email}{\textit{email-eu}\xspace}
\newcommand{\arx}{\textit{arxiv-hep-th}\xspace}
\newcommand{\fb}{\textit{facebook}\xspace}

\newcommand{\enron}{\textit{enron}\xspace}
\newcommand{\digg}{\textit{Digg}\xspace}

\makeatletter
\newcommand{\nosemic}{\renewcommand{\@endalgocfline}{\relax}}
\newcommand{\dosemic}{\renewcommand{\@endalgocfline}{\algocf@endline}}
\newcommand{\pushline}{\Indp}
\newcommand{\popline}{\Indm\dosemic}
\@ifpackageloaded{algorithm2e}{
	\SetKwComment{tcpas}{\{}{\}}
	\SetCommentSty{textnormal}
	\SetKw{Null}{null}
	\SetKw{AND}{and}
	\SetKw{OR}{or}
	\SetKw{EACH}{each}
	\SetKwInOut{Input}{input}
	\SetKwInOut{Output}{output}
	\SetKwInput{KwInput}{Input}
	\SetKwInput{KwOutput}{Output}
	\SetArgSty{textnormal}
	\SetKw{False}{false}
	\SetKw{True}{true}
	\SetKw{Break}{break}
	\SetKw{None}{none}
	\SetKw{Continue}{continue}
}{}
\makeatother

\newenvironment {squishlist}
{\begin{list}{$\bullet$}
		{ \setlength{\itemsep}{0pt}
			\setlength{\parsep}{3pt}
			\setlength{\topsep}{3pt}
			\setlength{\partopsep}{0pt}
			\setlength{\leftmargin}{1.5em}
			\setlength{\labelwidth}{1em}
			\setlength{\labelsep}{0.5em} } }
	{\end{list}}


\usepackage{tikz,pgfplots,pgfplotstable}
\usetikzlibrary{positioning,fit,shapes}
\usetikzlibrary{decorations.pathreplacing}
\usetikzlibrary{arrows}

\pgfdeclarelayer{background}
\pgfdeclarelayer{foreground}
\pgfsetlayers{background,main,foreground}

\makeatletter
\tikzset{multicircle/.style  args={#1, #2}{%
 alias=tmp@name, %
  postaction={%
    insert path={
     \pgfextra{%
     \pgfpointdiff{\pgfpointanchor{\pgf@node@name}{center}}%
                  {\pgfpointanchor{\pgf@node@name}{east}}%
     \pgfmathsetmacro\insiderad{\pgf@x}%
        \fill[white] (\pgf@node@name.center)  circle (\insiderad-\pgflinewidth);%
        \draw[#2] (\pgf@node@name.center)  circle (\insiderad-\pgflinewidth);%
        \fill[#2] (\pgf@node@name.center)  -- ++(0:\insiderad-\pgflinewidth) arc (0:#1:\insiderad-\pgflinewidth)--cycle;%
        }}}}}
\makeatother

\definecolor{yafaxiscolor}{rgb}{0.3, 0.3, 0.3}
\definecolor{yafcolor1}{rgb}{0.4, 0.165, 0.553}
\definecolor{yafcolor2}{rgb}{0.949, 0.482, 0.216}
\definecolor{yafcolor3}{rgb}{0.47, 0.549, 0.306}
\definecolor{yafcolor4}{rgb}{0.925, 0.165, 0.224}
\definecolor{yafcolor5}{rgb}{0.141, 0.345, 0.643}
\definecolor{yafcolor6}{rgb}{0.965, 0.933, 0.267}
\definecolor{yafcolor7}{rgb}{0.627, 0.118, 0.165}
\definecolor{yafcolor8}{rgb}{0.878, 0.475, 0.686}
\definecolor{yafcolor9}{rgb}{0.965, 0.733, 0.767}

\newlength{\yafaxispad}
\setlength{\yafaxispad}{-4pt}
\newlength{\yaftlpad}
\setlength{\yaftlpad}{\yafaxispad}
\addtolength{\yaftlpad}{-0pt}
\newlength{\yaflabelpad}
\setlength{\yaflabelpad}{-2pt}
\newlength{\yafaxiswidth}
\setlength{\yafaxiswidth}{1.2pt}
\newlength{\yafticklen}
\setlength{\yafticklen}{2pt}

\makeatletter
\def\pgfplots@drawtickgridlines@INSTALLCLIP@onorientedsurf#1{}
\makeatother

\newcommand{\yafdrawxaxis}[2]{
  \pgfplotstransformcoordinatex{#1}\let\xmincoord=\pgfmathresult 
  \pgfplotstransformcoordinatex{#2}\let\xmaxcoord=\pgfmathresult 
  \pgfsetlinewidth{\yafaxiswidth} 
  \pgfsetcolor{yafaxiscolor}
  \pgfpathmoveto{\pgfpointadd{\pgfpointadd{\pgfplotspointrelaxisxy{0}{0}}{\pgfqpointxy{\xmincoord}{0}}}{\pgfqpoint{-0.5\yafaxiswidth}{\yafaxispad}}}
  \pgfpathlineto{\pgfpointadd{\pgfpointadd{\pgfplotspointrelaxisxy{0}{0}}{\pgfqpointxy{\xmaxcoord}{0}}}{\pgfqpoint{0.5\yafaxiswidth}{\yafaxispad}}}
  \pgfusepath{stroke}

}
\newcommand{\yafdrawyaxis}[2]{
  \pgfplotstransformcoordinatey{#1}\let\ymincoord=\pgfmathresult 
  \pgfplotstransformcoordinatey{#2}\let\ymaxcoord=\pgfmathresult 
  \pgfsetlinewidth{\yafaxiswidth} 
  \pgfsetcolor{yafaxiscolor}
  \pgfpathmoveto{\pgfpointadd{\pgfpointadd{\pgfplotspointrelaxisxy{0}{0}}{\pgfqpointxy{0}{\ymincoord}}}{\pgfqpoint{\yafaxispad}{-0.5\yafaxiswidth}}}
  \pgfpathlineto{\pgfpointadd{\pgfpointadd{\pgfplotspointrelaxisxy{0}{0}}{\pgfqpointxy{0}{\ymaxcoord}}}{\pgfqpoint{\yafaxispad}{0.5\yafaxiswidth}}}
  \pgfusepath{stroke}
}

\pgfplotscreateplotcyclelist{yaf}{%
{yafcolor1,mark options={scale=0.75},mark=o}, 
{yafcolor2,mark options={scale=0.75},mark=square},
{yafcolor3,mark options={scale=0.75},mark=triangle},
{yafcolor4,mark options={scale=0.75},mark=o},
{yafcolor5,mark options={scale=0.75},mark=o},
{yafcolor6,mark options={scale=0.75},mark=o},
{yafcolor7,mark options={scale=0.75},mark=o},
{yafcolor8,mark options={scale=0.75},mark=o}} 

\pgfplotsset{axis y line=left, axis x line=bottom,
  tick align=outside,
  compat = 1.3,
  tickwidth=\yafticklen,
  clip = false,
  every axis title shift = 0pt,
    x axis line style= {-, line width = 0pt, opacity = 0},
    y axis line style= {-, line width = 0pt, opacity = 0},
    x tick style= {line width = \yafaxiswidth, color=yafaxiscolor, yshift = \yafaxispad},
    y tick style= {line width = \yafaxiswidth, color=yafaxiscolor, xshift = \yafaxispad},
    x tick label style = {font=\scriptsize, yshift = \yaftlpad},
    y tick label style = {font=\scriptsize, xshift = \yaftlpad},
    every axis y label/.style = {at = {(ticklabel cs:0.5)}, rotate=90, anchor=center, font=\scriptsize, yshift = -\yaflabelpad},
    every axis x label/.style = {at = {(ticklabel cs:0.5)}, anchor=center, font=\scriptsize, yshift = \yaflabelpad},
    x tick label style = {font=\scriptsize, yshift = 1pt},
    grid = major,
    major grid style  = {dash pattern = on 1pt off 3 pt},
  every axis plot post/.append style= {line width=\yafaxiswidth} ,
  legend cell align = left,
  legend style = {inner sep = 1pt, cells = {font=\scriptsize}},
  legend image code/.code={%
    \draw[mark repeat=2,mark phase=2,#1] 
    plot coordinates { (0cm,0cm) (0.15cm,0cm) (0.3cm,0cm) };%
  } 
}

\begin{document}


\fancyfoot[R]{\footnotesize{\textbf{Copyright \textcopyright\ 2018 by SIAM\\
      Copyright for this paper is retained by authors}}}

\title{Reconstructing a cascade from temporal observations}
\author{
Han Xiao$^{\star}$ \quad Polina Rozenshtein$^{\star}$ \quad Nikolaj Tatti$^\dagger$ \quad Aristides Gionis$^{\star}$ \\
\begin{tabular}{cc}
$^{\star}$Aalto University & $^\dagger$F-Secure \\
Espoo, Finland & Helsinki, Finland \\
\texttt{firstname.lastname@aalto.fi} & \texttt{nikolaj.tatti@f-secure.fi}
\end{tabular}
}
\date{}

\maketitle





\begin{abstract}
Given a subset of active nodes in a network
can we reconstruct the cascade that has generated these observations?
This is a problem that has been studied in the literature, 
but here we focus in the case that 
temporal information is available about the active nodes. 
In particular, 
we assume that in addition to the subset of active nodes
we also know their activation time. 

We formulate this cascade-reconstruction problem 
as a variant of a Steiner-tree problem:
we ask to find a tree that spans all reported active nodes
while satisfying temporal-consistency constraints.
%
We present three approximation algorithms. 
The best algorithm in terms of quality
achieves a $\bigO(\sqrt{\nrep})$-approximation guarantee, 
where \nrep is the number of active nodes, 
while the most efficient algorithm has linear\-ithmic running time, 
making it scalable to very large graphs.

We evaluate our algorithms on real-world networks with both simulated and real cascades. 
Our results indicate that utilizing the available temporal information 
allows for more accurate cascade reconstruction.
Furthermore, our objective leads to finding the ``backbone'' of the cascade
and it gives solutions of high precision.

\end{abstract}

\section{Introduction}

Ideas, behaviors, computer viruses, and diseases, spread in networks. 
People are 
influencing each other
adopting behaviors, innovations, or memes.
Diseases like flu or measles are transmitted from person to person, 
while computer viruses spread in computer networks.
The study of diffusion processes has been a central theme in network science and graph mining.
Topics of interest include modeling diffusion processes~\cite{gomez-rodriguez2013modeling,kempe2003maximizing,kermack1927contribution}, 
devising strategies to contain the spread of epidemics~\cite{pastor2002immunization,prakash2010virus},
maximizing the spread of influence for marketing purposes~\cite{kempe2003maximizing},
as well as identifying starting points and missing nodes in cascades~\cite{feizi2014network,farajtabar2015back,lappas2010finding,prakash:12:netsleuth,rozenshtein2016reconstructing,sadikov:2011,shah:11:culprit,sundareisan2015hidden}.

In this paper 
we consider the problem of reconstructing a cascade
that has occurred in a network, given partial observations.
We model the problem by considering a graph 
$\graph=(\nodes, \edges)$
and a subset of nodes $\nodes'\subseteq\nodes$
that have been activated/infected during the cascade
(people who have fallen sick, users who have adopted an innovation, etc.).
The goal is to 
infer the hidden cascade
and reconstruct
other possible active nodes.
Previous approaches on this problem make assumptions about the underlying  
diffusion models, such as 
the susceptible-infected model ({\si})~\cite{prakash:12:netsleuth,shah:11:culprit}
or the independent-cascade model ({\ic})~\cite{lappas2010finding}, 
or consider additional information, such as 
the exact time of all network interactions~\cite{rozenshtein2016reconstructing}.

\begin{figure}[t]
  \centering
  \includegraphics[width=0.95\linewidth]{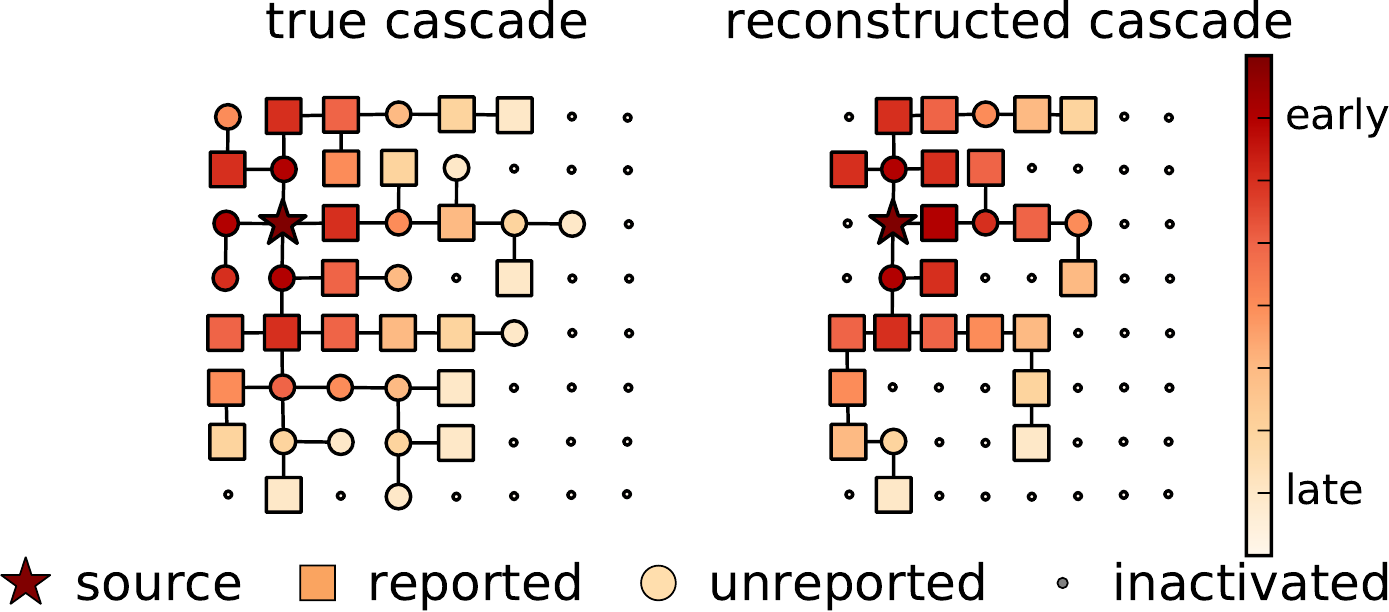}
  \label{figure:intro-example}
  \caption{
    Our method aims at reconstructing the underlying cascade given reported infections (left).
    The reconstructed cascade (right) is parsimonious and respects node infection order induced by infection time.
    Node infection time is indicated by the colorbar. }
\end{figure}

Our approach to the cascade-reconstruction problem 
is to utilize temporal information about the active nodes. 
In particular, we consider that for all observed active nodes
the activation time is known. 
This is a realistic assumption in many settings, 
e.g., it is often easy to determine when a patient got sick. 
However, we still consider the case of partial observations, 
i.e., $\nodes'$ is a subset of all active nodes in the cascade. 

A simple example is illustrated in Figure~\ref{figure:intro-example}.
A {\em ground truth cascade} is depicted in the left side. 
Only a subset of the infected nodes are known, 
and for those nodes we also know their infection time,
which is indicated by the colorbar in the right.
A cascade reconstructed by our algorithms, 
for this problem instance, is shown in the right side. 
The reconstructed cascade is required to parsimonious and to respect
the node infecting order induced by the input infection times.
We note that we do not make any assumption about the diffusion model; 
we only rely on parsimony and ask to find the smallest cascade
tree that is consistent with the observed data. 
Formulated in this manner, our approach is able to find the most important nodes
of the cascade and thus, it gives solutions of high precision.

From the technical viewpoint, 
we formulate the cascade-reconstruction problem as a {\em Steiner-tree} problem.
Given the reported active nodes and their activation timestamps, 
we seek to find a tree that spans all reported nodes, 
while all rooted paths in the tree preserve the order of the observed timestamps. 
We refer to such a tree by {\em order-preserving Steiner tree}
and to the problem we study by \Ourproblem.
This is a novel Steiner-tree problem variant.

For the proposed problem
we develop three approximation algorithms. 
The first algorithm, \algclosure, 
uses the metric closure of the graph induced by the reported active nodes:
it constructs a directed graph, finds a minimum spanning tree on that graph, 
and builds a cascade tree based on the minimum spanning tree.
The second algorithm, \alggreedy, 
is a simplification of \algclosure and it builds a cascade tree directly from the reported active nodes, 
without finding a minimum spanning tree on the metric closure. 
The third algorithm, \algmst, 
builds the cascade tree by a modified breadth-first search (\bfs)
that takes into account the activation timestamps to delay certain search branches.

All three algorithms come with provable approximation guarantees. 
Assuming that the number of reported active nodes is \nrep, 
\algclosure provides a $\bigO(\sqrt{\nrep})$-approximation guarantee
while both \alggreedy and \algmst give a $\nrep$-approximation guarantee.
In practice, when compared to a lower-bound obtained by the standard (non-temporal) Steiner-tree problem, 
all three algorithms give much better solutions than it is suggested by the theoretical guarantee.

In terms of scalability, 
the running time of \algclosure and \alggreedy is $\bigO(\nrep\nedges + \nrep^2)$ and $\bigO(\nrep\nedges)$ respectively, 
where $\nedges$ is the number of edges of the graph \graph. 
On the other hand, \algmst runs in time $\bigO(\nedges + \nrep\log\nrep)$, 
making it an extremely scalable algorithm, 
able to cope with very large graphs and large number of reported active nodes.

In summary, our contributions are as follows. 
\begin{squishlist}
\item
We present a novel formulation of a Steiner-tree problem 
with ordering constraints, 
which models the problem of reconstructing a cascade
from partial observations with temporal information.
\item 
For the proposed Steiner-tree problem
we provide three approximation algorithms. 
The best algorithm in terms of quality
achieves a $\bigO(\sqrt{\nrep})$-approximation guarantee
and has running time $\bigO(\nrep\nedges)$.
The most scalable algorithm 
achieves a $\nrep$-approximation guarantee 
and has running time $\bigO(\nedges + \nrep\log\nrep)$.
\item 
We experimentally evaluate the proposed algorithms
on real-world networks
using cascades simulated with different diffusion models.
Comparison with the baseline Steiner-tree algorithm
that does not use temporal information, 
shows that incorporating timestamps in the problem setting
does not lead to significant loss in solution quality (as measured by the objective function)
while allowing to find other active nodes with better predicted activation time.
\end{squishlist}

The rest of the paper is organized as follows. First, we discuss the related work in Section~\ref{rel}. 
Then we introduce necessary notation and definitions in Section~\ref{section:preliminaries}. 
Section~\ref{pf} presents the problem formulation, followed by Section~\ref{approx} 
where we discuss and analyze our approximation algorithms. 
In Section~\ref{exp}, we evaluate the performance of the proposed algorithms,
and Section~\ref{concl} concludes the paper.

\section{Related work}
\label{rel}

Diffusion processes have been widely studied in general. 
However, the problem of ``reverse engineering'' an epidemic has received relatively little attention.
Shah and Zaman~\cite{shah:11:culprit} formalize the notion of \textit{rumor-centrality} 
to identify the single source node of an epidemic under the susceptible-infected model ({\si}), 
and provide an optimal algorithm for $d$-\textit{regular trees}.
Prakash et al.~\cite{prakash:12:netsleuth} study
the problem of recovering multiple seed nodes under the {\si} model using the 
minimum-description length principle ({\mdl}) ---
i.e., as in this paper a parsimony principle is used.
Lappas et al.~\cite{lappas2010finding} study the problem of 
identifying $k$ seed nodes, or \textit{effectors}, 
in a partially-activated network, 
which is assumed to be in steady-state under the independent-cascade) model ({\ic}). 
Feizi et al.~\cite{feizi2014network} and
Sefer et al.~\cite{sefer:14:diffusion} address the same problem
(identifying $k$ seed nodes) in the case of multiple snapshots of a graph.
Feizi et al.\ consider the {\si} model, while Sefer et al.\ consider 
the susceptible-exposed-infectious-recovered model ({\seir}) model. 
These works consider fully-observed infection footprints and focus only 
on the source-detection problem. 

A recent approach on the cascade-reconstruction problem was proposed
by Rozenshtein et al.~\cite{rozenshtein2016reconstructing}, 
and like this work temporal network information is used. 
However, unlike our setting, 
Rozenshtein et al.\ assume that the complete temporal network is given,
i.e., timestamps are available for all edges. 
%
Thus, the setting is different than the one studied in this paper 
and it makes stronger assumptions about data availability. 

The recent study of Farajtabar et al.~\cite{farajtabar2015back} is one of the closest works to ours. 
However, they consider the problem of identifying a single seed given multiple partially observed cascades. 
and they explicitly assume continuous time diffusion model. 
Another related work is presented by Sundareisan et al.~\cite{sundareisan2015hidden}, 
who simultaneously find the starting points of the epidemic and the missing infections 
given one sample snapshot and
assuming the {\si} model. 
In contrast, our paper addresses the general problem of reconstructing a cascade, 
given several reported active nodes, and corresponding activation timestamps, 
but without assuming any model.

From the theoretical point of view, our problem formulation is a generalization on minimum Steiner-tree problem. This classic \np-complete problem has a folklore 2-approximation algorithm via minimum spanning tree. However, the order constrains make our problem more related to minimum \emph{directed} Steiner tree, which is known to be inapproximable to better than logarithmic factor due to reduction from minimum set cover. 
The best known algorithm was developed by Charikar et al.~\cite{charikar1999approximation}. 
and 
recently improved by Huang~\cite{huang2015minimum}. 
The algorithm constructs the tree by greedy recursion, 
and obtains a guarantee of $\ell(\ell-1)k^{\frac{1}{\ell}}$, 
where $\ell$ is the depth of recursion and $k$ is the number of terminals.
However, the running time is $\bigO(\nnodes^\ell k^\ell)$, and thus, impractical.

\section{Preliminaries}
\label{section:preliminaries}

We consider an undirected graph $\graph=(\nodes, \edges)$ with \nnodes nodes and \nedges edges.
We assume that a \emph{dynamic propagation process} is taking place in the network. 
We use the generic term \emph{active}  
to refer to nodes that have reacted positively during the propagation process, 
e.g., they have infected by the virus, adopted the meme, etc.

We assume that the propagation process starts at some seed node \seed 
and other nodes become active via edges from their active neighbors.
The activation spread can occur according to an unknown model. 
As we will see, our problem definition and algorithms
rely only on a parsimony principle, 
and do not depend on the underlying activity-propagation model.

We only observe a subset of all active nodes in the graph.
For each observed active node $u$
we also obtain the time $t$ when $u$ was activated.
The set of active nodes together with their activation timestamps is 
denoted by $\rep=\{(u,t)\}$, and it is referred to as {\em reported nodes}.
The number of reported active nodes is denoted by $|\rep|=\nrep$.
In many applications we have $\nrep\ll\nnodes$, 
but this is not a required assumption.
The seed \seed does not necessarily belong in
the set of reported active nodes.

The set of active nodes in \rep (i.e., without timestamps) is denoted by $\nodes(\rep)$. 
Similarly, the set of all timestamps in \rep is denoted by $\ts(\rep)$.
We use $\tmin$ to denote the earliest timestamp in $\ts(\rep)$.
If $(u, t) \in \rep$, we write $t(u) = t$. 
The set of all nodes with timestamp $t$ is denoted by $\nodes(t) = \set{u \mid (u, t) \in \rep}$. 

Consider a candidate seed node $r$ and a reported node $(u_1,t_1)\in\rep$. 
A path $\path$ from $r$ to $u_1$ is called \emph{order-respecting path}
if \path does not contain any reported node $(u_2,t_2)\in\rep$, with $t_2 > t_1$.
Note that an order-respecting path can contain two reported nodes with the same timestamp.

\section{Problem formulation}
\label{pf}

Given a graph $\graph=(\nodes, \edges)$ and a set of reported nodes $\rep=\{(u,t)\}$, 
our goal is to reconstruct the most likely cascade that has generated the observed data. 
As mentioned before, we do not make any assumption on the underlying propagation model, 
we only assume that a cascade starts at a seed node (which is unknown) 
and proceeds via graph neighbors.
Motivated by a {\em parsimony consideration}, such as Occam's razor, 
we can formulate the cascade-reconstruction problem 
as finding the {\em smallest} cascade that explains the observed data.
It is clear that such a {\em minimal} cascade
consists of a {\em tree} $\tree$ rooted at the seed node 
and containing all the reported nodes in $\rep$.
Our goal is to infer such a tree $\tree$. 
Furthermore, we should ensure that the reconstructed tree
is {\em consistent} with the observed timestamps in $\rep$.
One natural way to provide a notion of {\em temporal consistency}
is to require that all paths in the reconstructed tree \tree are {\em order-respecting}.
The above discussion motivates our problem definition.

\begin{problem}[\Ourproblem]
\label{problem}
We are given a graph $\graph=(\nodes, \edges)$ and a set of reported nodes 
$\rep=\{(u,t)\}$ with $u\in \nodes$ and $t\in\real$ 
for a relatively small subset of nodes in \nodes.
The goal is to find a seed $\seed\in\nodes$ and 
a tree $\tree$ rooted in $\seed$, such that
\begin{squishlist}	
\item[{\em (}$i${\em )}]
the $\tree$ spans all the reported nodes $\nodes(\rep)$, 
\item[{\em (}$ii${\em )}]
all paths in \tree starting at node $\seed$ are order-respecting, and
\item[{\em (}$iii${\em )}]
the total number of edges in \tree is minimized. 
\end{squishlist}
\end{problem}

Requirements ($i$) and ($ii$) ensure that the reconstructed tree
is consistent with the observed data, 
while requirement ($iii$) quantifies parsimony.

It is clear that for an adversarially-selected subset of reported nodes
it is impossible to reconstruct the ground-truth cascade.
For instance, consider the case that all the nodes in the graph are active,
but the reported nodes are concentrated in a local neighborhood of the graph.
In this case, it is unreasonable to expect from 
any reconstruction algorithm to infer the ground-truth cascade. 

\newpage
On the other hand, if the set of reported nodes is a ``representative'' subset 
of the active nodes (e.g., a uniform sample), 
we expect that it will be possible to reconstruct a tree 
that contains the most {\em salient} and most {\em central} nodes of the cascade. 
In other words, we expect that, compared to all active nodes, 
the set of nodes contained in our reconstructed tree has {\em high precision},
but {\em low recall}.

We now proceed to discuss the algorithmic complexity of Problem~\ref{problem}.
A first observation is
that we are asking to find both a seed node \seed and a tree \tree rooted in~\seed.
A simple way to achieve this
is to consider each node $v\in\nodes$ as a candidate seed, 
find a tree $\tree_v$ rooted in $v$, 
and return the optimal tree (smallest number of edges) 
among all trees found.

Interestingly, it turns out that one does not need to consider all possible
nodes as candidate seeds. 
Instead the optimal tree is the one that is rooted at the reported node with the 
earliest timestamp.
For the following observation recall that $\tmin = \min \ts(\rep)$.

\begin{observation}
\label{observation:root}
The optimal solution to the \Ourproblem\ problem
is a tree rooted at the reported node $(u_0, \tmin)\in \rep$. 
If there is more than one node in the set $\nodes(\tmin)$, 
any of them can be considered as a root for the optimal tree.
\end{observation}

A consequence of Observation~\ref{observation:root}
is that we can restrict our attention to problem \routedproblem, 
a variant of \Ourproblem where the seed \seed is given as input.

Even though Observation~\ref{observation:root} gives a useful and practical optimization, 
it does not change the computational complexity of the problem we consider.

\begin{proposition}
Problem \Ourproblem is \nphard.
\end{proposition}

\begin{corollary}
Problem \routedproblem\ is \nphard.
\end{corollary}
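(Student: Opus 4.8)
The plan is to derive the corollary directly from the preceding Proposition together with Observation~\ref{observation:root}, which jointly show that \Ourproblem and \routedproblem are equivalent from the standpoint of computational complexity. Concretely, I would exhibit a polynomial-time many-one reduction from \Ourproblem to \routedproblem and then invoke the \nphard-ness of the former.

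First I would take an arbitrary instance of the decision version of \Ourproblem, consisting of a graph \graph, reported nodes \rep, and a budget $B$ on the number of edges. In polynomial time I would compute $\tmin = \min \ts(\rep)$ and select any node $u_0 \in \nodes(\tmin)$. I would then output the \routedproblem instance with the same graph \graph, the same reported nodes \rep, seed $\seed = u_0$, and the same budget $B$. By Observation~\ref{observation:root}, the optimal \Ourproblem tree is rooted at a node of $\nodes(\tmin)$, and any such node may serve as the root; hence the minimum number of edges achievable in the two instances coincides, and the mapped instance is a YES-instance of \routedproblem if and only if the original is a YES-instance of \Ourproblem. Since the Proposition establishes that \Ourproblem is \nphard, this reduction immediately yields that \routedproblem is \nphard as well.

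There is essentially no hard step here; the only points requiring care are verifying that the reduction runs in polynomial time (finding the minimum timestamp and one representative node is trivial) and that Observation~\ref{observation:root} indeed licenses fixing the seed to an earliest-timestamp node without changing the optimum. For completeness one could instead give a self-contained reduction from the standard \STproblem problem, mirroring the proof of the Proposition: given terminals $X$, set $\rep = \set{(x,0) \mid x \in X}$ and fix the seed to any $x_0 \in X$. When all timestamps are equal the order-respecting constraint becomes vacuous, so \routedproblem reduces to finding a minimum-edge tree rooted at $x_0$ that spans $X$, which in an undirected graph is exactly a minimum Steiner tree on $X$. Either route establishes the claim.
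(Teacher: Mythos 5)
Your proposal is correct and follows exactly the route the paper intends: the corollary is left without an explicit proof precisely because it follows from combining the Proposition with Observation~\ref{observation:root}, which is your main argument (fix the seed to a node of $\nodes(\tmin)$ and note the optima coincide). Your alternative self-contained reduction from \STproblem with all timestamps set to $0$ and the seed fixed to a terminal is also valid and mirrors the paper's proof of the Proposition, so either route suffices.
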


\noindent
All proofs are provided in the appendix.

\section{Approximation algorithms}
\label{approx}

The Steiner tree problem, and many of its variants, 
have been studied extensively in the literature. 
Different approximation algorithms are available 
depending on the exact setting and problem variant. 

In the standard Steiner tree problem~\cite{williamson2011design}
we are given an undirected weighted graph $\graph=(\nodes, \edges)$, 
and a set of terminal nodes $\terminals\subseteq\nodes$,
and the goal is to find tree that spans all terminal nodes
and whose total edge weight is minimized. 
For this problem there are several approximation algorithms
with a constant-factor approximation guarantee, 
such an algorithm that uses the metric closure 
of the graph induced by the terminals~\cite{williamson2011design},
or the primal-dual method~\cite{goemans1995}.

The Steiner tree problem on directed graphs is considerably more difficult; 
the best known algorithm, proposed by Charikar et al.~\cite{charikar1999approximation}, 
gives an approximation guarantee
$\ell(\ell-1)k^{\frac{1}{\ell}}$, for recursion parameter $\ell>2$, 
and has running time $\bigO(\nnodes^\ell k^\ell)$,
while it is known that the problem
cannot be approximated with a factor better than 
$c\log|X|$, unless $\mathbf{P}=\mathbf{NP}$.

The \Ourproblem problem, proposed in this paper, 
is a novel Steiner-tree variant, 
and thus, new approximation algorithms need to be devised.
The problem formulation assumes an undirected graph,
but ordering constraints are required
to ensure that the reconstructed cascade is consistent with the observed timestamps.
This makes the problem significantly different than 
existing formulations on either undirected or directed graphs.

\subsection{Algorithm based on metric closure}

Our first algorithm for the \Ourproblem problem
uses the metric closure of the graph induced by the terminals,
and it is an adaptation of the algorithm 
for the standard Steiner-tree problem on undirected graphs.
We call this algorithm \algclosure.

Some additional notation is needed to present the \algclosure algorithm.
Consider an instance of \Ourproblem, 
i.e., a graph $\graph=(\nodes,\edges)$ and a set of reported nodes $\rep=\{(u,t)\}$.
Given two reported nodes $u$ and $v$ 
we define the \emph{excluding shortest path} from $u$ to $v$ 
to be the shortest path from $u$ to $v$ in $G(V\setminus \nodes(\rep),E)$, 
that is, the shortest path that does not use any other reported nodes. 
If there are multiple shortest paths, we select one arbitrarily.
The excluding shortest path from $u$ to $v$ is denoted by $\exclpath(u,v)$, 
and its length by $\excldist(u, v) = \abs{\exclpath(u, v)}$.

The \algclosure algorithm
constructs a weighted directed graph $H$ among the reported nodes. 
Edge directions in $H$ are from terminals of earlier timestamps
to terminals of later timestamps, 
and edge weights correspond to excluding shortest-path lengths.
Next, \algclosure constructs a directed minimum spanning tree on $H$ rooted in \seed.
Then the algorithm reconstructs a cascade on the original graph \graph
by starting from seed \seed and 
processing the terminals $\nodes(\rep)$ in chronological order:
each new terminal $u$ 
is added to the currently-reconstructed cascade
via the best path that goes through one of its ancestors in the minimum spanning tree on $H$.
The pseudocode from \algclosure is given in Algorithm~\ref{alg:arbor}.
\begin{algorithm2e}[t]
  \caption{\algclosure}
  \label{alg:arbor}
  \KwInput{$\graph=(\nodes,\edges)$, $\rep=\{(u,t)\}$, root $\seed\in\nodes(\tmin)$}
  \KwOutput{Order-respecting 
    \tree rooted in \seed that spans all nodes in $\nodes(\rep)$.}	$Q \define \nodes(\rep)$\;
  Construct weighted directed graph $H=(Q,X)$ with $X=\{(u,v): t(u)\leq t(v)\}$ and edge weights 
  $w(u,v)= \excldist(u,v)$\;
  $A \define$  minimum directed spanning tree of $H$ rooted in \seed\;
  $\tree \define (\{\seed\},\emptyset)$  \tcp*{initially, \tree contains seed \seed and no edge}
  \ForEach{$u\in \nodes(\rep)$ in chronological order}
  {
    \nosemic find $w$ so that\;
    \dosemic \pushline
    $w$ is in $T$\;
    $w$ is the closest ancestor of $u$ in $A$\;
    \popline add path from $u$ to $w$ in $A$ to $\tree$\;
  }	
  \Return {\tree}
\end{algorithm2e}

We first prove that \algclosure returns a feasible solution. 

\begin{proposition}
\algclosure returns an order-respect\-ing Steiner tree, which spans $\rep$.
\end{proposition}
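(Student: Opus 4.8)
The plan is to verify the two defining properties of the output separately---that \tree spans every reported node and that every path from \seed in \tree is order-respecting---and to read off the tree structure as part of the same analysis. The spanning property is the easy half: the main loop visits every $u\in\nodes(\rep)$ and attaches it through the $A$-path from $u$ to its closest ancestor $w$ that already lies in \tree. Such a $w$ always exists, since the arborescence $A$ is rooted at \seed and \seed is placed in \tree at initialization, so in the worst case $w=\seed$. Hence every reported node is eventually inserted, and because each insertion links the new node to the already-built portion through $w$, the result is connected and contains all of $\nodes(\rep)$.

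For the order-respecting property I would first record two structural facts. Since $H$ contains only arcs $(u,v)$ with $t(u)\le t(v)$, every root-to-node path of the spanning arborescence $A$ has non-decreasing timestamps; equivalently, every $A$-ancestor of a node $v$ satisfies $t(\cdot)\le t(v)$. Moreover, each arc of $A$ is realized in \graph by an excluding shortest path $\exclpath(\cdot,\cdot)$, whose internal vertices are by definition non-reported and therefore cannot affect the order-respecting condition, which constrains only reported nodes.

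The key step is then to show that the reported vertices lying on the \seed-to-$v$ path of \tree are all $A$-ancestors of $v$. I would establish this by unwinding the construction: $v$ is joined to some $A$-ancestor $w_1$ via the $A$-path from $w_1$ to $v$, whose reported vertices are all $A$-ancestors of $v$; the vertex $w_1$ was itself joined to an $A$-ancestor $w_2$, and so on until \seed is reached. Concatenating these segments shows that every reported vertex met on the way from \seed to $v$ is an $A$-ancestor of $v$, hence carries timestamp at most $t(v)$, which is precisely the order-respecting requirement. A small point to check is that the intermediate reported vertices spliced in when the $A$-path from $w$ to $u$ is added all share the timestamp $t(u)$: any strictly earlier such vertex would have been processed already (we proceed in chronological order) and would already be in \tree, contradicting the minimality of $w$.

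The hard part, I expect, is not order-respecting but the tree property itself. When two excluding shortest paths share a non-reported internal vertex, the union of the inserted paths need not be acyclic. I would handle this by observing that the reported vertices, linked by the closest-ancestor relation, already form an acyclic structure mirroring $A$, so any cycle must run entirely through non-reported internal vertices. One can then extract a spanning tree rooted at \seed that breaks such cycles while leaving the \seed-to-$v$ sequences of reported vertices intact, so that the order-respecting guarantee argued above is preserved.
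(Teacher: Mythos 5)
Your spanning argument and your order-respecting argument for the un-repaired construction are correct, and you have put your finger on the one genuinely delicate point: the union of the added paths need not be acyclic, because the excluding shortest paths realizing different arcs of $A$ can share non-reported internal vertices. (The paper's own proof sidesteps this by flatly asserting that each added path meets the current tree in exactly one node --- precisely the assertion you doubt.) The gap is in your repair step, and it is a real one. First, your structural claim is false: a cycle in the union need not run entirely through non-reported vertices; it only needs to \emph{cross} at one. Concretely, suppose $A$ contains arcs $s \to z \to u$ and $s \to v$ with $t(v) < t(z) < t(u)$, and suppose $\exclpath(z,u)$ and $\exclpath(s,v)$ share a non-reported vertex $x$; the resulting cycle passes through the reported vertices $s$ and $z$. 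Second, breaking a cycle must delete an edge from some designated path, so the sequences of reported vertices from $s$ cannot all be kept ``intact'': some terminal is necessarily rerouted through a portion of another branch, and for a rerouted terminal your $A$-ancestor argument no longer applies. In the example, if the cycle is broken by deleting an edge of $\exclpath(s,v)$ between $s$ and $x$, the tree path to $v$ passes through $z$ with $t(z) > t(v)$, violating the order constraint. So an arbitrary cycle-breaking fails, and the existence of a \emph{good} spanning tree is exactly what needs proof; your proposal asserts it without argument.

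The clean way to close the gap is the chronological invariant you already half-developed (your ``small point''), which is also the substance of the paper's proof: when $u$ is processed, every reported node already in $\tree$ has timestamp at most $t(u)$, while every reported node on the newly added $A$-path from $w$ to $u$ has timestamp exactly $t(u)$. Therefore, rather than repairing cycles after the fact, attach the new path at the \emph{first} vertex where it meets $\tree$ when walking from $u$ toward $w$, adding only that prefix. This keeps $\tree$ acyclic by construction, and order-respecting holds no matter where the attachment happens: on the path from $\seed$ to any newly added vertex, all reported nodes above the attachment point have timestamp at most $t(u)$ by the invariant, and all those below it have timestamp exactly $t(u)$. The $A$-ancestor machinery, while valid for intact designated paths, is both unnecessary here and too fragile to survive the rerouting that acyclicity forces.
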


Next we show that \algclosure provides a
$\bigO(\sqrt{\nrep})$-ap\-pro\-xi\-ma\-tion guarantee ---
recall that  $\nrep=|\nodes(\rep)|$ is the number of terminals (reported nodes).

\begin{proposition}
\label{prop:closureapprox}
The \algclosure algorithm provides
approximation guarantee $2(1+\sqrt{3/2})\sqrt{\nrep - 1}$ for problem \Ourproblem.
\end{proposition}

The running time of \algclosure algorithm is as follows. 
Computing shortest-path distances 
to construct the graph $H$ requires time $\bigO(\nedges\nrep)$, 
while finding the minimum directed spanning tree on $H$ requires time $\bigO(\nrep^2)$.
In the second phase of \algclosure, 
constructing the Steiner tree \tree requires time $\bigO(\nnodes\nrep)$. 
Thus, the overall running time of \algclosure algorithm is $\bigO(\nrep\nedges + \nrep^2)$.

\subsection{Greedy algorithm}

Our second algorithm, \alggreedy, is a simpler variant of \algclosure. 
\alggreedy avoids the first step of computing 
a minimum spanning tree, 
and instead, it reconstructs the cascade by adding paths to the terminals
in chronological order---i.e., similar to the second phase of \algclosure. 

Given a terminal node $u\in\nodes(\rep)$ with timestamp $t(u)$
and a graph node $v\in\nodes$
we define the \emph{extended excluding shortest path} 
from $v$ to terminal $u$ 
to be the shortest path from $v$ to $u$ in \graph, 
which may include only terminals $w$ with the same timestamp as $u$.
The extended excluding shortest path from $v$ to $u$ is denoted by $\extexclpath(v,u)$,
and its length by $\extexcldist(v,u)$.

The \alggreedy algorithm starts by adding to the cascade only the seed node, 
i.e., $\tree \define (\{\seed\},\emptyset)$.
Then \alggreedy processes the reported nodes in chronological order.
For each reported node$u$, \alggreedy finds
the {\em shortest} extended excluding shortest path $\extexclpath(v,u)$,
over all nodes $v$ that are currently included in the cascade \tree, 
and add this path in \tree.
Pseudocode for \alggreedy is given in Algorithm~\ref{alg:greedy}.

It is clear that \alggreedy returns a feasible solution.
%
Additionally, we can show that \alggreedy provides an appro\-xi\-ma\-tion guarantee
for \Ourproblem, 
albeit a weaker bound than the one obtained by \algclosure. 

\begin{algorithm2e}[t]
	\caption{\alggreedy}
	\label{alg:greedy}
	\KwInput{$\graph=(\nodes,\edges)$, $\rep=\{(u,t)\}$, root $\seed\in \nodes$}
	\KwOutput{Order-respecting 
	\tree rooted in \seed that spans all nodes in $\nodes(\rep)$.}
	$\tree \define (\{\seed\},\emptyset)$\;
	\ForEach{$u\in \nodes(\rep)$ in chronological order\\ \tcp{ties are broken arbitrarily}}
	{
		$v \define \arg\min_{z\in\tree} \extexcldist(z,u)$\;
		$\tree \define \tree \cup \extexclpath(v,u)$\;
	}
		
	\Return {\tree}
\end{algorithm2e}

\begin{proposition}
\label{prop:greedyapprox}
Algorithm \alggreedy yields a $\nrep$-approximation guarantee for the \Ourproblem problem.
\end{proposition}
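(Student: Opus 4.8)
The plan is to charge each iteration of \alggreedy against the full cost of an optimal tree. By Observation~\ref{observation:root} we may fix an optimal order-respecting Steiner tree $\tree^*$ rooted at \seed, and let $\mathrm{OPT}$ denote its number of edges. Since the seed has the earliest timestamp and is processed first with $\extexcldist(\seed,\seed)=0$, \alggreedy effectively adds one path per remaining terminal, and the number of new edges contributed when processing a terminal $u$ is at most $\extexcldist(v,u)$, where $v$ is the node of \tree selected in that step (the added path $\extexclpath(v,u)$ has $\extexcldist(v,u)$ edges and $v$ already lies in \tree). It therefore suffices to prove the single-step bound $\extexcldist(v,u)\le\mathrm{OPT}$ for every processed terminal $u$; summing over the at most $\nrep-1$ nontrivial iterations then yields a tree with at most $(\nrep-1)\mathrm{OPT}\le\nrep\,\mathrm{OPT}$ edges, which is the claimed guarantee.

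The heart of the argument is this single-step bound, which I would establish by exhibiting one admissible extended excluding shortest path inside $\tree^*$. Fix a terminal $u$ and consider the unique path $\pi$ from \seed to $u$ in $\tree^*$; because all rooted paths in $\tree^*$ are order-respecting, every terminal lying on $\pi$ has timestamp at most $t(u)$. Walking from $u$ back toward \seed, let $w$ be the first node of $\pi$ that already belongs to the current greedy tree \tree; such a $w$ exists because $\seed\in\tree$. The subpath $\pi_{w,u}$ of $\pi$ from $w$ to $u$ is then a candidate for $\extexclpath(w,u)$: any terminal strictly between $w$ and $u$ is, by the choice of $w$, not in \tree and hence not yet processed, so its timestamp is at least $t(u)$, and combined with the order-respecting bound $\le t(u)$ this forces its timestamp to equal $t(u)$, which is exactly what the extended excluding shortest path permits. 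Consequently $\extexcldist(w,u)\le\abs{\pi_{w,u}}\le\abs{\pi}\le\mathrm{OPT}$, and since the greedy step minimizes over all nodes currently in \tree we obtain $\extexcldist(v,u)\le\extexcldist(w,u)\le\mathrm{OPT}$.

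The main obstacle is the verification that $\pi_{w,u}$ is genuinely admissible, i.e.\ that it passes only through terminals sharing the timestamp $t(u)$. This is where two ingredients must be combined carefully: the order-respecting property of $\tree^*$ (giving timestamps $\le t(u)$ for nodes on $\pi$) and the chronological processing order, together with the fact that a processed terminal is always inserted into \tree as the endpoint of the path added for it (so, contrapositively, any interior node absent from \tree is unprocessed and thus has timestamp $\ge t(u)$). Once this admissibility is in place, the remaining steps---bounding the newly added edges by $\extexcldist(v,u)$ and telescoping the sum over terminals---are routine, yielding the $\nrep$-approximation.
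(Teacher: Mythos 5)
Your proof is correct, and its skeleton is the same as the paper's: process terminals chronologically, bound the number of edges added in each greedy step by the cost of the optimum, and sum over at most $\nrep$ iterations. Where you genuinely differ is in how the per-step bound is certified. The paper's proof introduces the shortest order-respecting path $\path(\seed,u)$ from \seed to $u$, notes $\cost(\path(\seed,u))\leq\cost(\tree^*)$, and then simply \emph{asserts} that the greedy step satisfies $\extexcldist(v,u)\leq\cost(\path(\seed,u))$. That assertion is not immediate: one cannot just take $z=\seed$ in the greedy minimization, because an order-respecting path from \seed may pass through terminals with strictly smaller timestamps, which the extended excluding shortest path is forbidden to use (so $\extexcldist(\seed,u)$ can exceed $\cost(\path(\seed,u))$, or even be infinite). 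Your walk-back argument---take the path $\pi$ from \seed to $u$ in $\tree^*$, let $w$ be the first node of $\pi$ (seen from $u$) that already lies in the current greedy tree, and observe that every terminal strictly inside $\pi_{w,u}$ is unprocessed (hence has timestamp $\geq t(u)$) and order-respecting (hence $\leq t(u)$), so it has timestamp exactly $t(u)$ and is admissible---is precisely the justification the paper omits, and it uses the two facts that make it work: processed terminals always end up in \tree, and processing is chronological. So your proof is not merely equivalent; it closes a real gap in the published argument, and as a minor bonus tightens the count to $\nrep-1$ nontrivial iterations.
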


The running time of \alggreedy is similar to that of \algclosure:
processing each reported node requires a \bfs computation, 
so the overall running time is $\bigO(\nrep\nedges)$.

\subsection{Delayed BFS algorithm}

Both previous algorithms, \algclosure and \alggreedy, 
perform $k$ operations that are equivalent to \bfs, 
and thus their running time is $\bigO(\nrep\nedges)$.
In cases that there are many reported nodes, i.e., when $\nrep$ is large, 
algorithms \algclosure and \alggreedy are not scalable to large graphs.

To address this challenge we propose a third algorithm, \algmst,
which, like \alggreedy, provides a $\nrep$-approximation guarantee,
but is more efficient.
The main idea of \algmst is to perform a single \bfs starting from the root \seed.
Whenever a terminal  $u$ is encountered,
\algmst checks whether all terminals with timestamp smaller than $t(u)$ have been visited.
If they have been visited, the \bfs continues.
If not, the \bfs ``below'' node $u$ is delayed 
until all terminals with timestamp smaller than $t(u)$ have been visited.

Pseudocode for \algmst is given in Algorithm~\ref{alg:dbfs}.
The variable $Q$ represents the \bfs queue 
while $D$ represents a sorted array with the terminals
at which {\bfs} has been delayed; 
the terminals in $D$ are sorted in chronological order.
The variable $t_{\mathit{cur}}$ keeps the 
smallest timestamp of the terminals that have not been processed yet.
The array $c$ keeps for each timestamp $t$ 
the number of terminals with that timestamp that have not been processed yet.
The counter for $c[t]$ is decreased whenever we delete from $D$ a terminal with timestamp~$t$.
Thus, at each step of the algorithm 
the minimum time $t_{\mathit{cur}}$ can be computed as the smallest timestamp~$t$
for which $c[t]\not=0$ in amortized constant time.

\begin{algorithm2e}[t]
    \caption{\algmst}
    \label{alg:dbfs}
    \KwInput{$\graph=(\nodes,\edges)$, $\rep=\{(u,t)\}$, root \seed}
    \KwOutput{Order-respecting 
	\tree rooted in \seed that spans all nodes in $\nodes(\rep)$.}
	$Q \define \set{\seed}$\;
	$D \define \emptyset$\;
	$\tree \define \emptyset$\;
	$c[t] \define \abs{\set{v \mid (v, t) \in \rep}}$\;
	$c[t(\seed)] \define c[t(\seed)] - 1$\;
	$t_{\mathit{cur}} \define \min \set{t \mid c[t] \neq 0}$\;
	
	\While {$Q \neq \emptyset$} {
		$v \define $ top element in $Q$; delete $v$ from $Q$\;
		\uIf {$v \notin \nodes(\rep)$} {
			\ForEach {$(v, w) \in E$ with unmarked $w$} {
				add $(v, w)$ to $\tree$;
				add $w$ to $Q$; mark $w$\;
			}
		}
		\Else {
			add $v$ to $D$\;
		}
		\ForEach {$v$ in $D$ in chronological order} {
			\lIf {$t(v) \neq t_{\mathit{cur}}$}  {
				\Break
			}
			\ForEach {$(v, w) \in E$ with unmarked $w$} {
				add $(v, w)$ to $\tree$;
				add $w$ to $Q$; mark $w$\;
			}
			$c[t(v)] \define c[t(v)] - 1$\;
			$t_{\mathit{cur}} \define \min \set{t \mid c[t] \neq 0}$\;
		}
	prune branches from $\tree$ that do no have terminal leaf\;
	\Return \tree\;
	}
\end{algorithm2e}

As mentioned before, 
algorithm \algmst provides a provable approximation guarantee, 
similar to the one of \alggreedy.

\begin{proposition}
Algorithm \algmst yields a $\nrep$-appro\-xi\-ma\-tion guarantee for the \Ourproblem problem.
\end{proposition}

The running time of \algmst\ is $\bigO(\nedges + \nrep\log\nrep)$.
The $\bigO(\nedges)$ part is due to the \bfs on the graph 
(the delayed order does not impact the running time)
while the $\bigO(\nrep\log\nrep)$ is due to maintaining the arrays $D$ and~$c$.
Thus, \algmst\ is an algorithm with excellent sca\-la\-bi\-lity 
and can be used for very large graphs.

\section{Experimental results}
\newlength{\figwidth}
\label{exp}


\spara{Datasets:}
We experiment on real-world graphs with both simulated and real cascades.
We use the following real-world graphs from SNAP:\footnote{http://snap.stanford.edu/data/index.html}
($i$)
\email: email data from an European research institution.
There is an edge $(u, v)$ if person $u$ has sent at least one email at person $v$.
The graph has 986 nodes and 25\,552 edges;
($ii$)
\grqc: General Relativity and Quantum Cosmology collaboration network from e-print arXiv.
The graph has 4\,158 nodes and 13\,428 edges; 
($iii$) \arx: High Energy Physics collaboration network from the arXiv, 
consisting of 8\,638 nodes and 24\,827 edges; and
($iv$) \fb: ``friends lists'' from Facebook. 
The graph has 4039 nodes and 88234 edges. 

\spara{Cascading models:}
We simulate cascades using four types of models. 
($i$) susceptible infected (\si);
($ii$) independent cascade (\ic);
($iii$) continuous-time diffusion process (\ct)\cite{rodriguez2016influence};
($iv$) shortest path (\spath), in which contagion propagates by shortest paths. 
For \si, infection probability is set to 0.5.
For \ct, infection time is globally distributed by exponential distribution with $\beta=1$.
For \si, \ct and \spath, we continue the cascade until at least half of the nodes are activated. 
For \ic, activation probability is tuned network-wise to activate on expecation half of the nodes.
For \si, \ic, and \spath transmission delay is one time unit. 

\spara{Real cascades:}
We use the \digg dataset.\footnote{https://www.isi.edu/~lerman/downloads/digg2009.html}
The underlying graph has 279\,631 nodes and 1\,548\,131 edges.
Each story corresponds to a cascade.
For most cascades the activated nodes do not form a connected component; 
in such cases we extract the largest connected component.
We experimented on 18 large cascades (average size 1\,965). 

\spara{Methods:}
We compare the following four methods:
($i$) \algone in Algorithm~\ref{alg:arbor};
($ii$) \alggreedy in Algorithm~\ref{alg:greedy};
($iii$) \tbfs in Algorithm~\ref{alg:dbfs}.
We also consider the standard Steiner-tree problem, 
where no temporal information is used.
The resulting algorithm, \baseline, uses the {\sc\large mst}-based technique~\cite{vazirani2013approximation}.
For all methods, earliest reported activation is selected the root according to Observation~\ref{observation:root}. 

\spara{Measures:}
To evaluate the performance of each method, we compare:
($i$) objective function value defined in Problem~\ref{problem};
($ii$) precision and recall of the set of activated nodes inferred by the tree
with respect to the actual activated nodes;
($iii$) order accuracy: an edge $u \rightarrow v$ in the tree is \textit{correct} if it respects the true infection order, $t(u) \le t(v)$. 
Order accuracy is the fraction of correct edges in the predicted tree. 

For all simulated cascades, we experiment with reporting probabilities with exponential increase,
$q=\set{0.001 \times 2^i \mid i=0, \ldots, 8}$.
For real cascades, we experimented with $q=\set{0.001 \times 2^i \mid i=1, \ldots, 5}$.

For simulated cascades, measurements are averaged over 100 runs for each experiment setting, 
while for real cascades, we average over 8 runs. 

\spara{Objective function:}
Figure~\ref{fig:obj} shows the average tree size
with respect to a the fraction of reported nodes. 
We observe that \tbfs produces larger trees than the other methods 
because it does not explicitly minimize tree size.
We expect \baseline to give the smallest trees
as it does not impose any ordering constraint. 
However, we observe that  
all methods give comparable sizes.

\begin{figure}[t]
  \setlength{\figwidth}{4.0cm}
  \setlength{\tabcolsep}{0pt}
  \begin{tabular}{rr}
    \includegraphics[width=\figwidth]{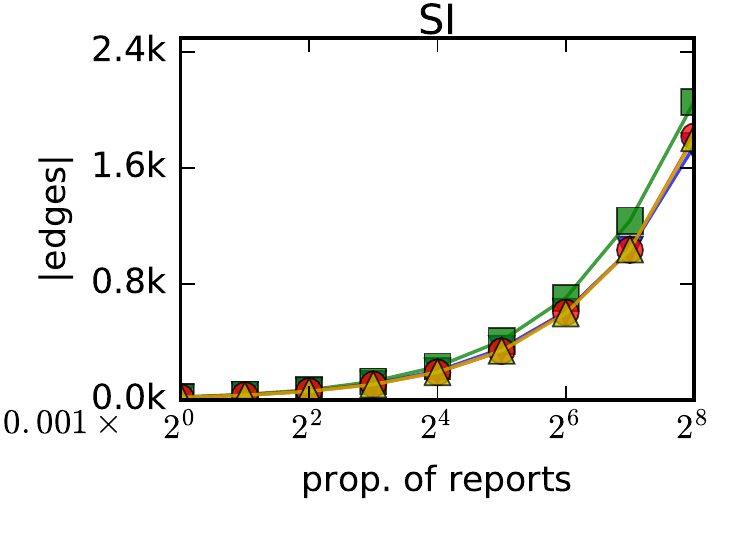} & 
                                                                                \includegraphics[width=\figwidth]{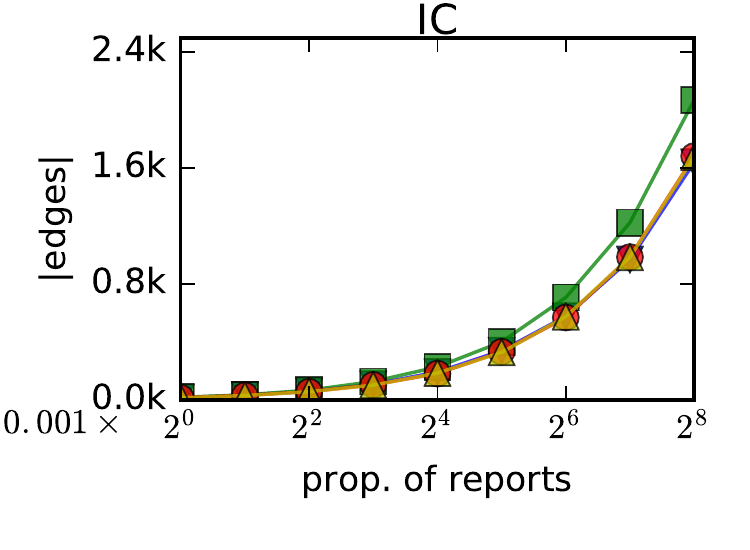} \\
   \multicolumn{2}{r}{\includegraphics[width=2\figwidth]{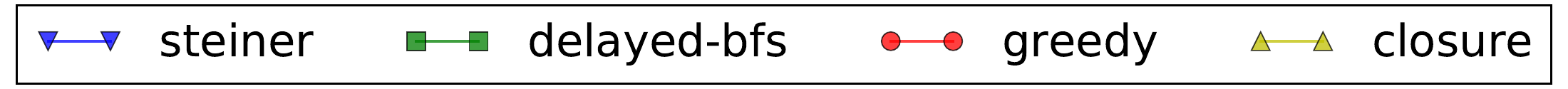}}
  \end{tabular}  

  \caption{Tree size (objective function)
    produced by different methods in graph \arx,
    under \si (left figure) and \ic (right figure)}
  \label{fig:obj}
\end{figure}

\spara{Precision and recall:}
We next demonstrate our methods' adaptability on different graphs and models with respect to node precision and recall.
Figure~\ref{fig:precision_recall_datasets} varies the graphs while fixing the cascade model, meanwhile Figure~\ref{fig:precision_recall_models} does the other way around.

In both Figure~\ref{fig:precision_recall_datasets} and Figure~\ref{fig:precision_recall_models}, 
we observe that \textit{all} four methods achieve node precision larger than 0.8. 
For most settings, node precision is close to 1.
For \ic model in Figure~\ref{fig:precision_recall_models}, precision drops slightly and \tbfs performs worse than the other three.
Note that even though \baseline does not explicitly cope with infection order, it still achieves equivalent performance. 
This demonstrates that the parsimony consideration in \Ourproblem is reasonable with respect to achieving high node precision. 

For node recall, we observe the following in both figures. 
First, \arx and \grqc,  node recall grows linearly as proportion of reports.
However, for \fb and \enron, node recall grows slower compared to the other two.
The reason is \fb and \enron are graphs with larger density, in which it generally takes \textit{fewer} Steiner nodes to construct a Steiner tree. 

Second, \tbfs tends to achieve higher recall than
the other three 
because it does not explicitly minimize the tree size, therefore it captures more nodes. 
Third, \algone, \alggreedy and \baseline tend to have similar recall.

\begin{figure*}[t]
  \setlength{\figwidth}{3.5cm}\setlength{\tabcolsep}{0pt}
  \centering
  \begin{tabular}{rrrr}
  \includegraphics[width=\figwidth]{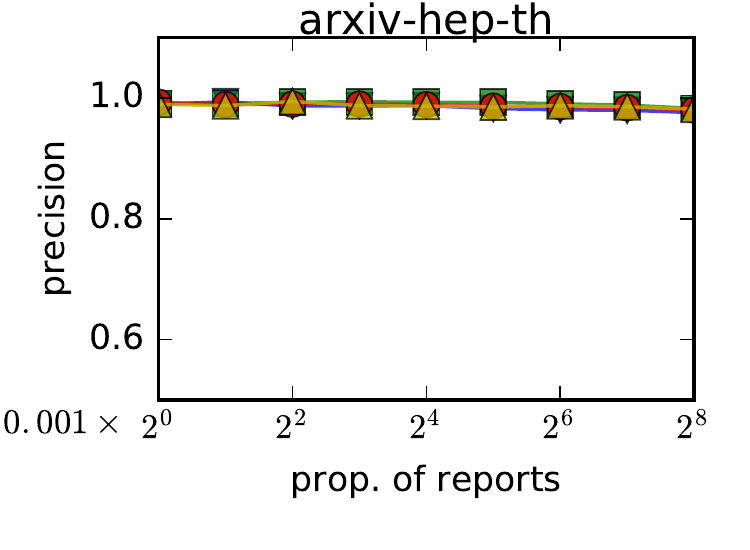}&
  \includegraphics[width=\figwidth]{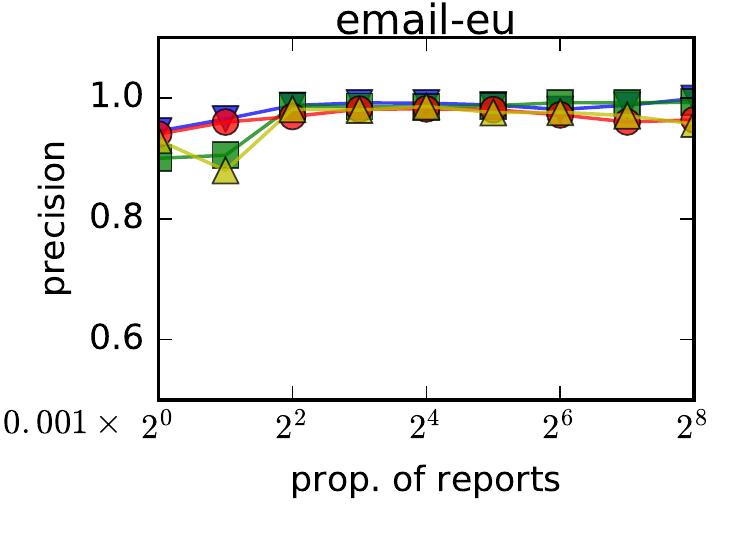}&
  \includegraphics[width=\figwidth]{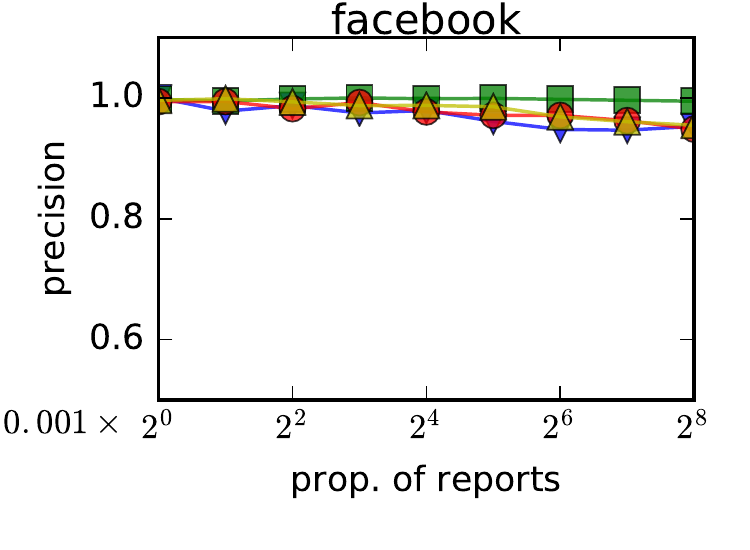}&
  \includegraphics[width=\figwidth]{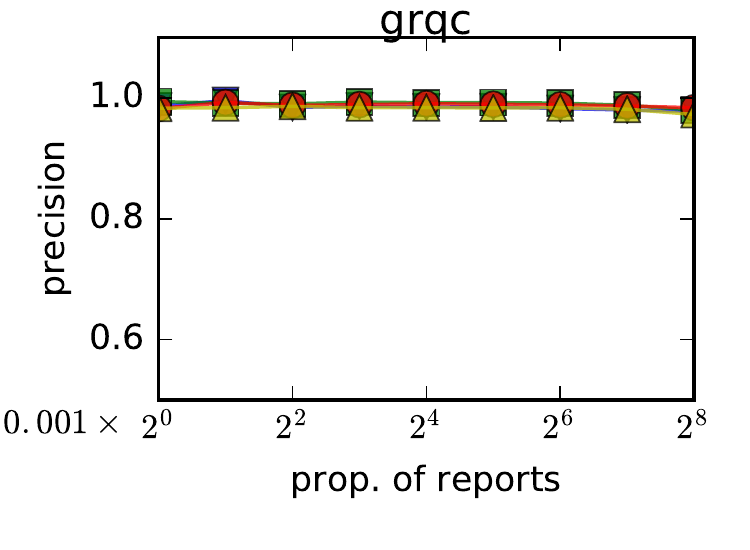}
    \\
  \includegraphics[width=\figwidth]{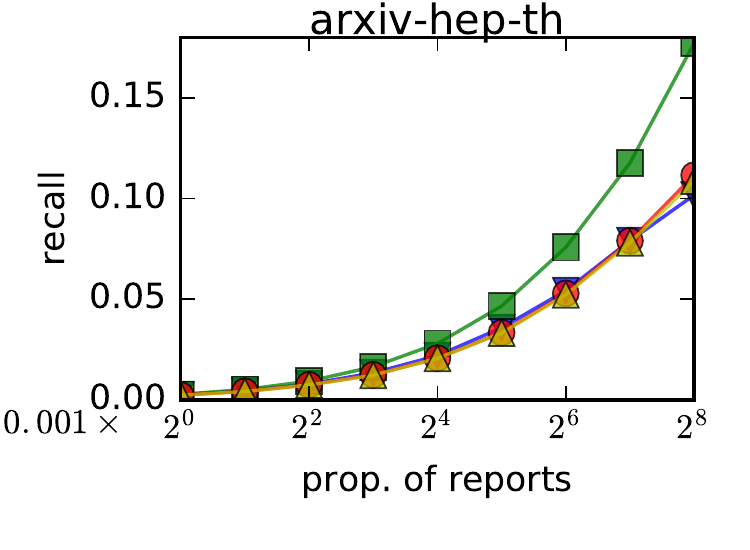}&
  \includegraphics[width=\figwidth]{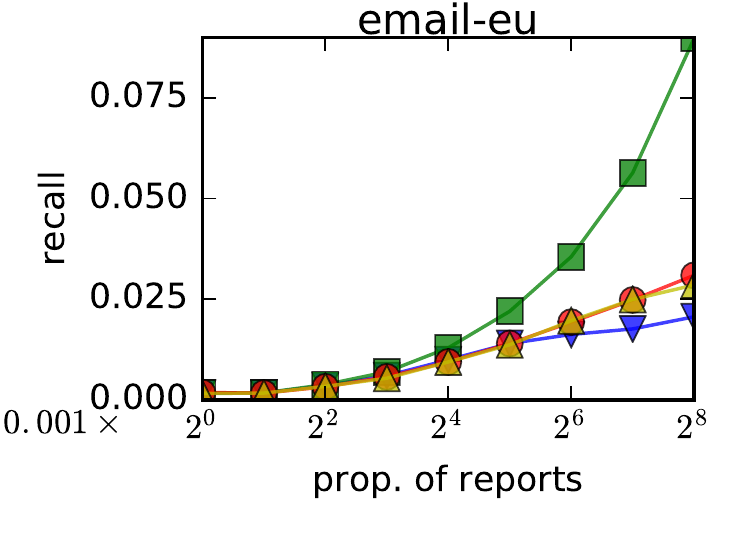}&
  \includegraphics[width=\figwidth]{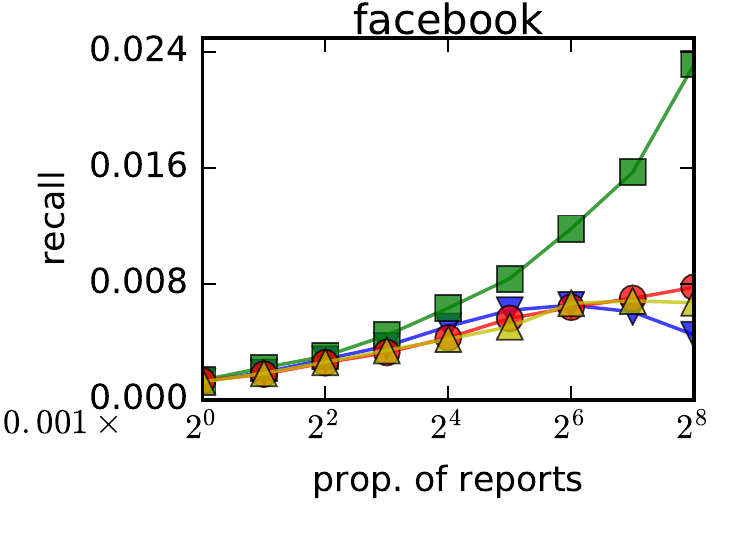}&
  \includegraphics[width=\figwidth]{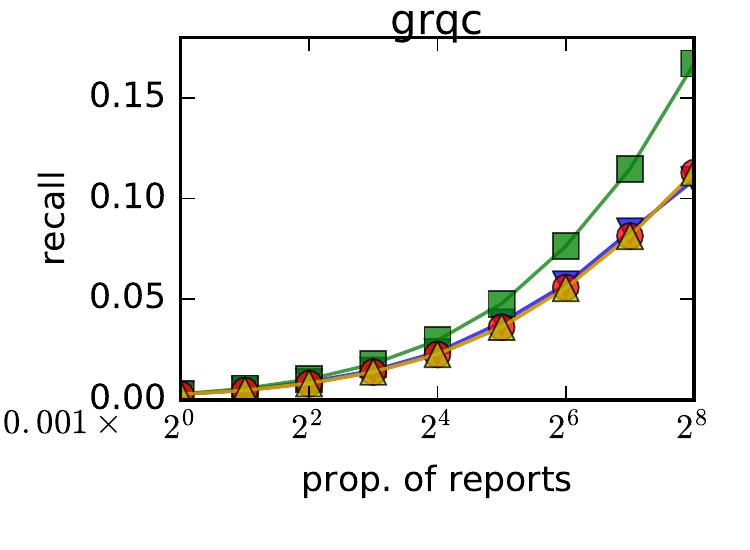}  \\    
  \multicolumn{4}{r}{\includegraphics[width=2.5\figwidth]{figs/si-by_datasets/legend.pdf}}
  \end{tabular}
  \caption{Node precision (upper row) and recall (lower row)
    for different graphs under \si model.}
  \label{fig:precision_recall_datasets}
\end{figure*}

\begin{figure*}[t]
  \setlength{\figwidth}{3.5cm}\setlength{\tabcolsep}{0pt}
  \centering
  \begin{tabular}{rrrr}
\includegraphics[width=\figwidth]{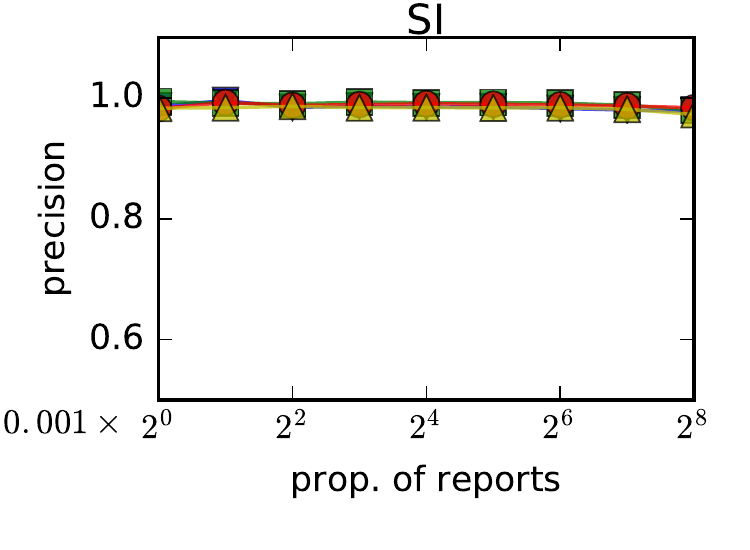}&
\includegraphics[width=\figwidth]{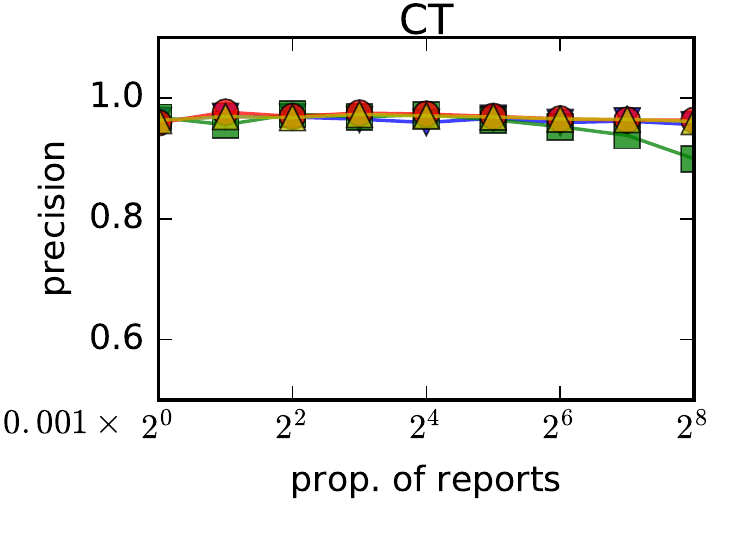}&
\includegraphics[width=\figwidth]{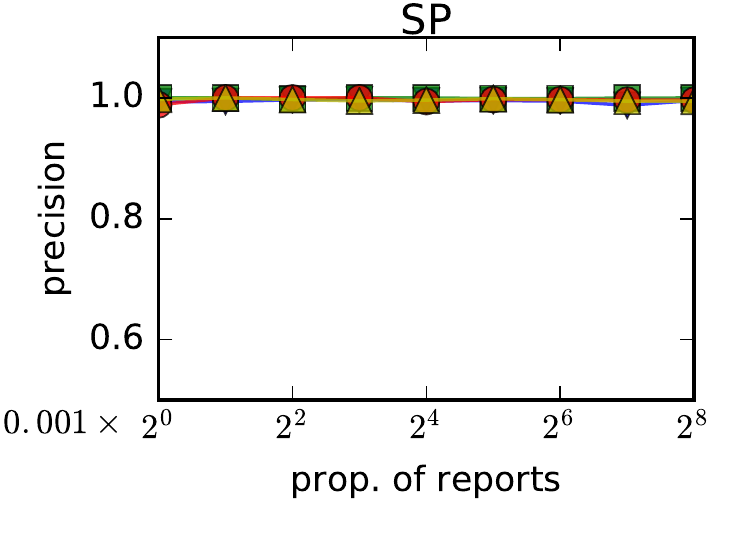}&
\includegraphics[width=\figwidth]{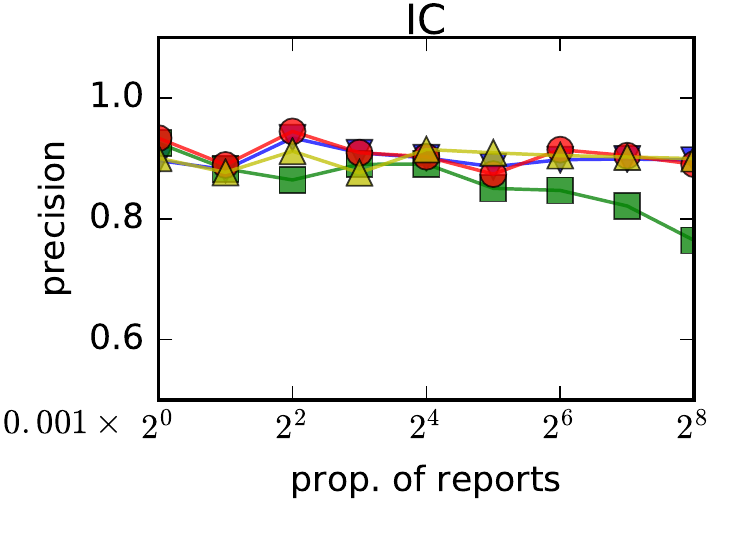}\\
\includegraphics[width=\figwidth]{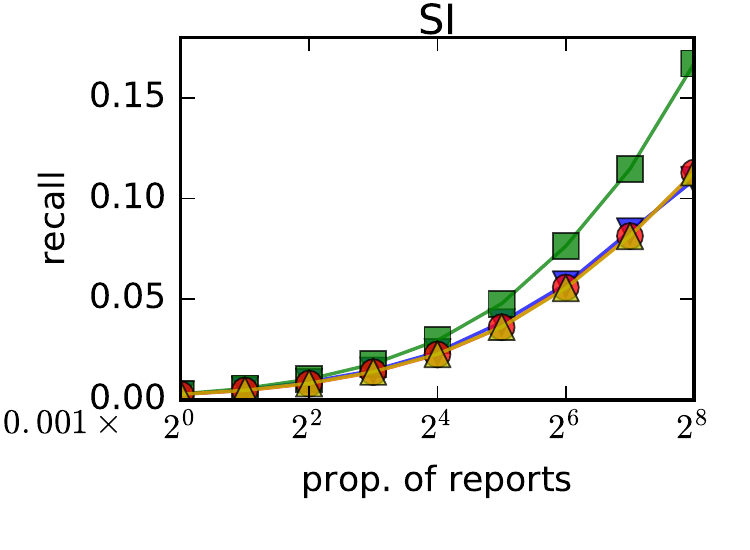}&
\includegraphics[width=\figwidth]{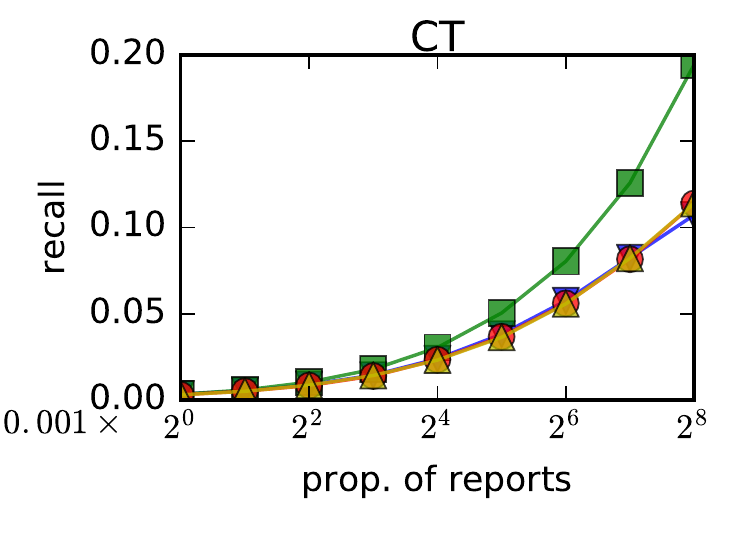}&
\includegraphics[width=\figwidth]{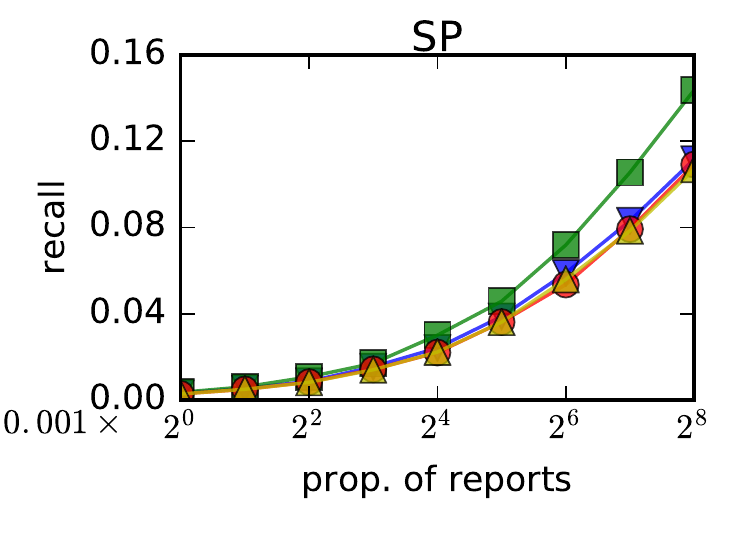}&
\includegraphics[width=\figwidth]{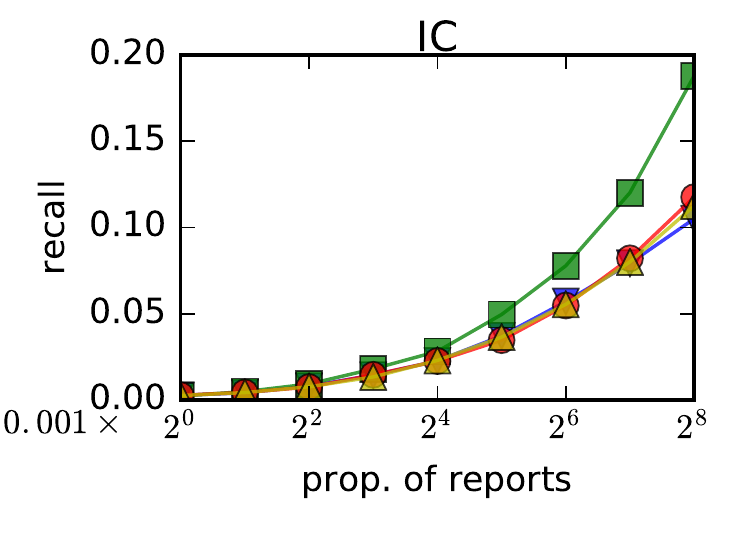} \\    
  \multicolumn{4}{r}{\includegraphics[width=2.5\figwidth]{figs/si-by_datasets/legend.pdf}}
  \end{tabular}
  \caption{Node precision (upper row) and recall (lower row)
    for different models on \grqc.}
  \label{fig:precision_recall_models}
\end{figure*}

\spara{Order accuracy:}
Next, we report order accuracy.
In general, \algone, \alggreedy and \tbfs perform better than \baseline under all cascade models and all graphs.
This advantage is demonstrated in Figure~\ref{fig:order-accuracy}, in which the upper row fixes the cascade model while the lower row fixes the underlying graph. 
This is expected because \algone and \tbfs explicitly construct tree that respect the infection order.

In addition, methods that models order explicitly improve their order accuracy as proportion of reports increase.
However, this is not always true for \baseline.
For example, its order accuracy deteriorates for \fb and \email under \ct in Figure~\ref{fig:order-accuracy}.

\begin{figure*}[t]
  \setlength{\figwidth}{3.5cm}\setlength{\tabcolsep}{0pt}
  \centering
  \begin{tabular}{rrrr}
  \includegraphics[width=\figwidth]{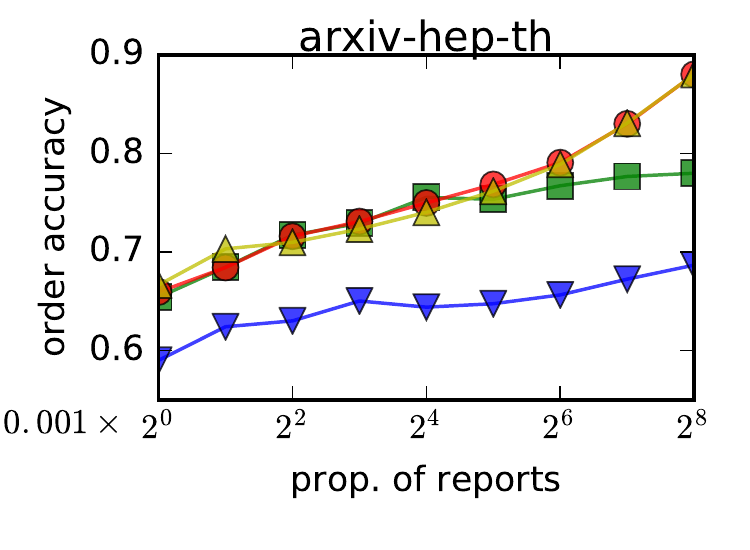}&
  \includegraphics[width=\figwidth]{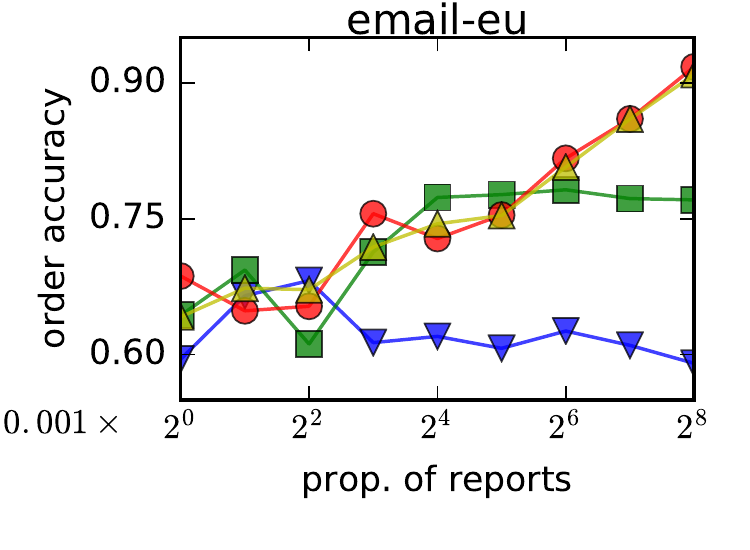}&
  \includegraphics[width=\figwidth]{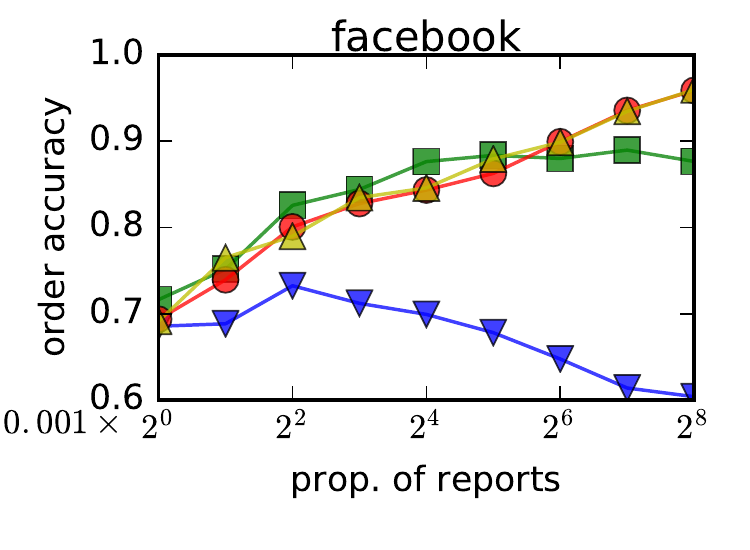}&
  \includegraphics[width=\figwidth]{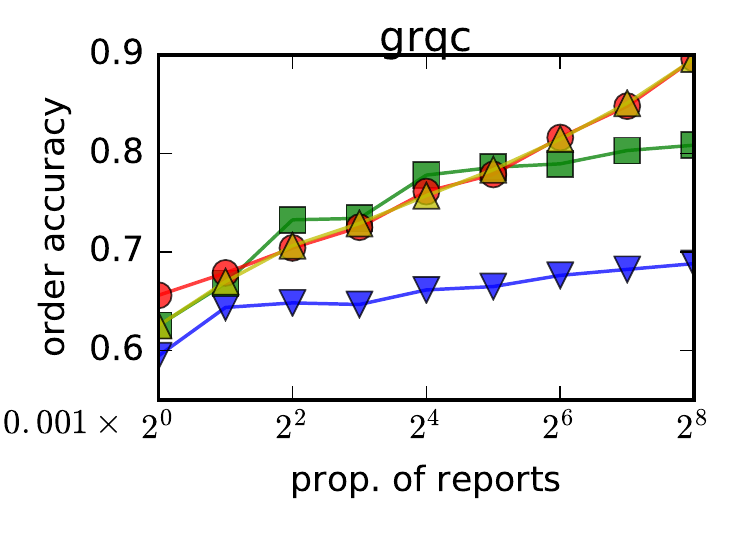} \\

  \includegraphics[width=\figwidth]{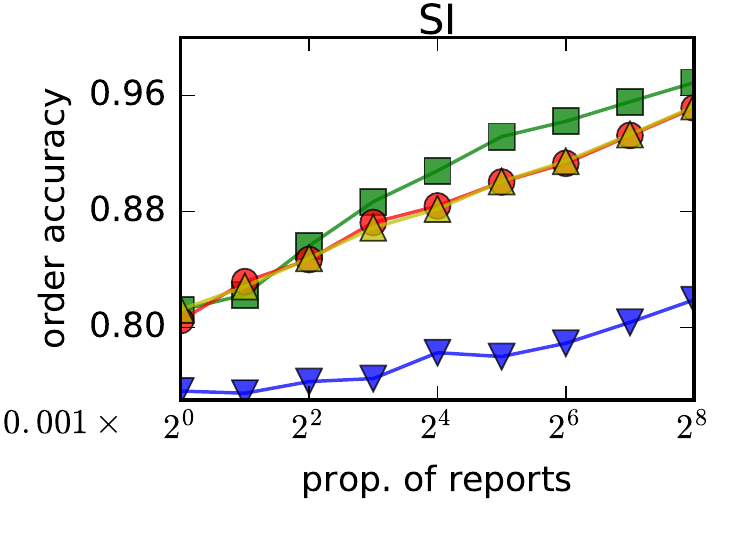}&
  \includegraphics[width=\figwidth]{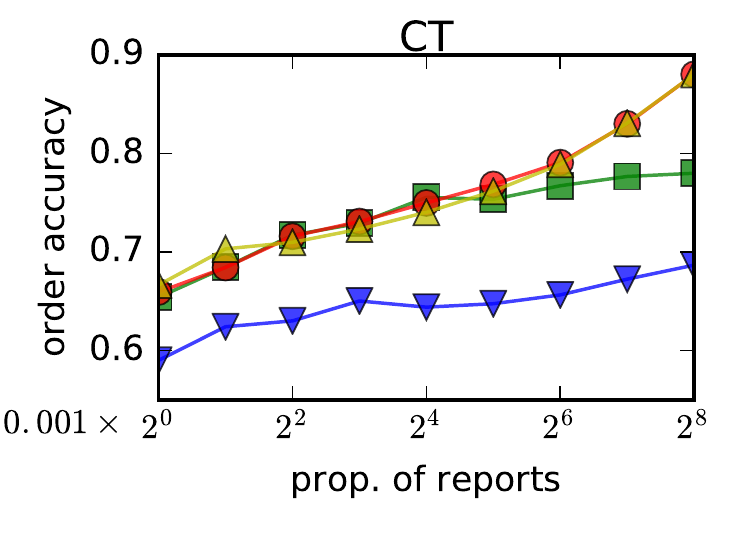}&
  \includegraphics[width=\figwidth]{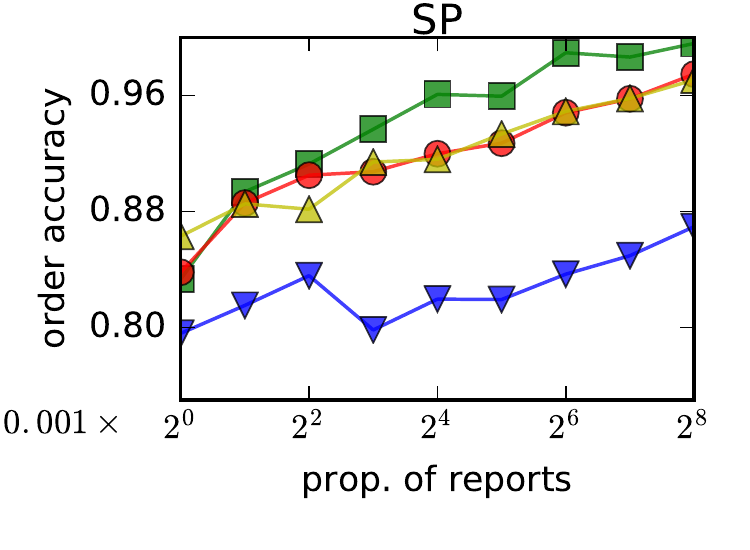}&
  \includegraphics[width=\figwidth]{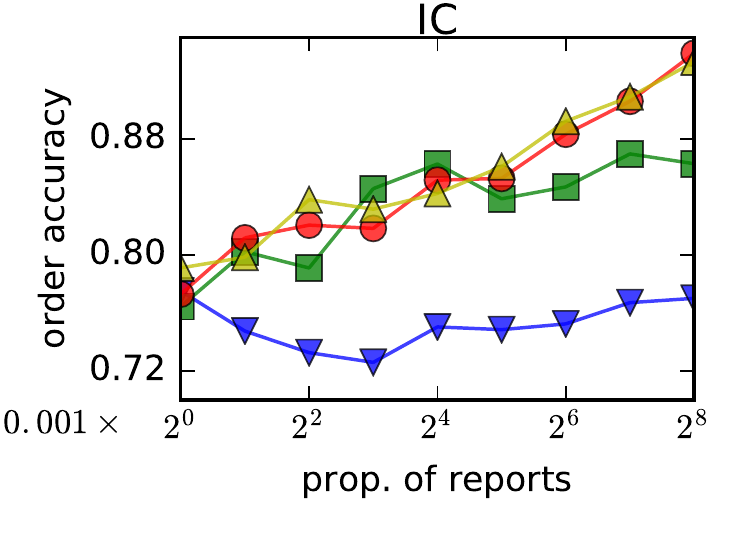}
  \\
  \multicolumn{4}{r}{\includegraphics[width=2.5\figwidth]{figs/si-by_datasets/legend.pdf}}
  \end{tabular}
  \caption{Order accuracy across datasets and cascade models. 
    \textbf{Upper row}: under \ct across all datasets. 
    \textbf{Lower row}: in graph \arx under all models }
  \label{fig:order-accuracy}
\end{figure*}

\spara{Real cascades:}
Performance measure on real cascades is given in Figure~\ref{fig:digg} for both large and small cascades. 
For most the measures, they demonstrate similar behavior with respect to that on synthetic cascades.
However, one noticeable difference is that node precision drops significantly.
The reasons can be two-fold: 1) the infected nodes are tightly connected with each other as well as other uninfected nodes.
In some cases, the uninfected nodes serve as good Steiner nodes
2) parsimony assumption does not hold for certain real cascades. 

\begin{figure*}[t]
  \setlength{\figwidth}{3.5cm}\setlength{\tabcolsep}{0pt}
  \centering
  \begin{tabular}{rrrr}
  \includegraphics[width=\figwidth]{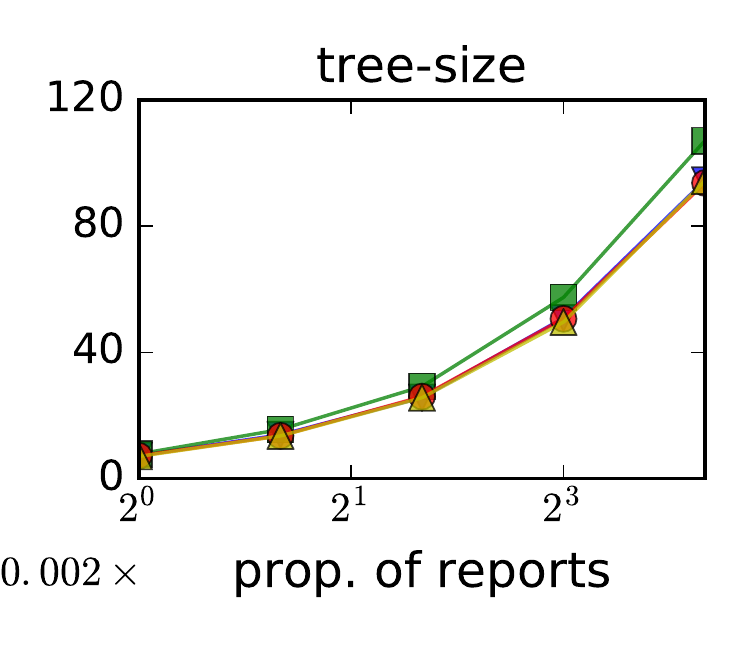}&
  \includegraphics[width=\figwidth]{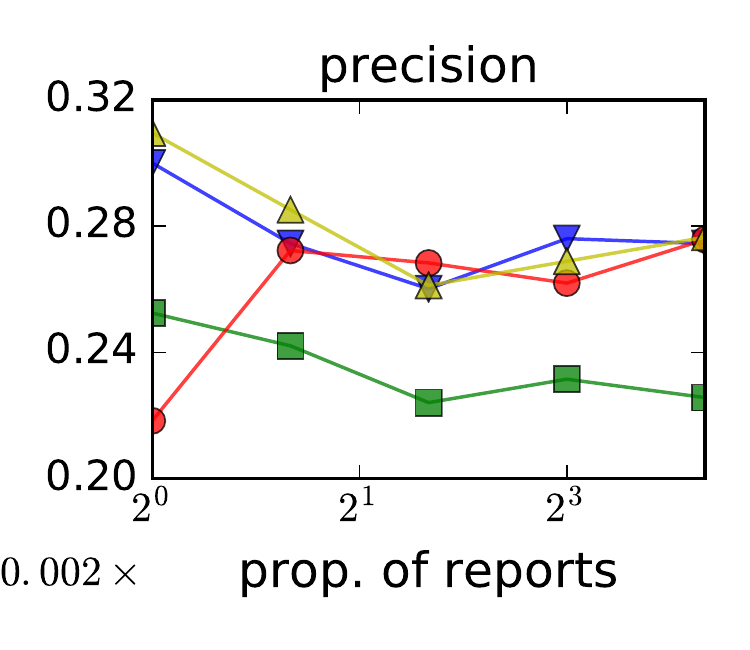}&
  \includegraphics[width=\figwidth]{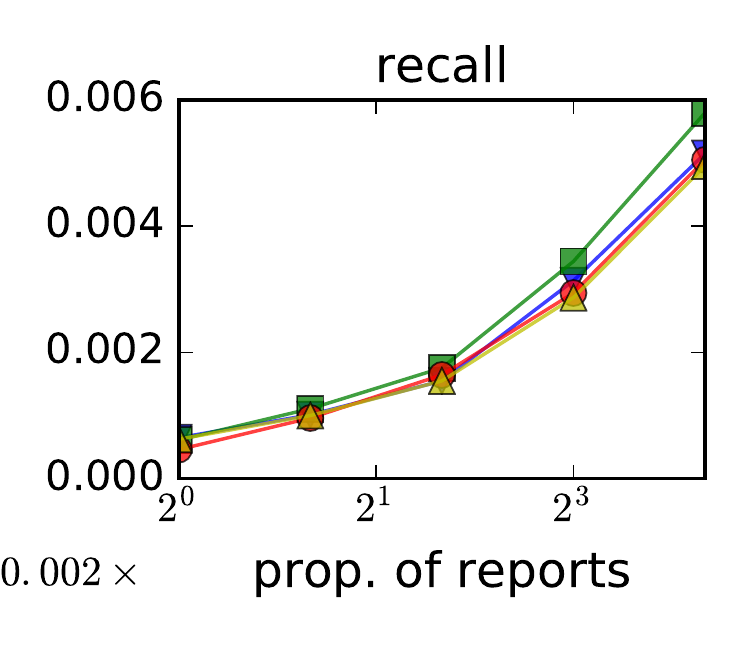}&
  \includegraphics[width=\figwidth]{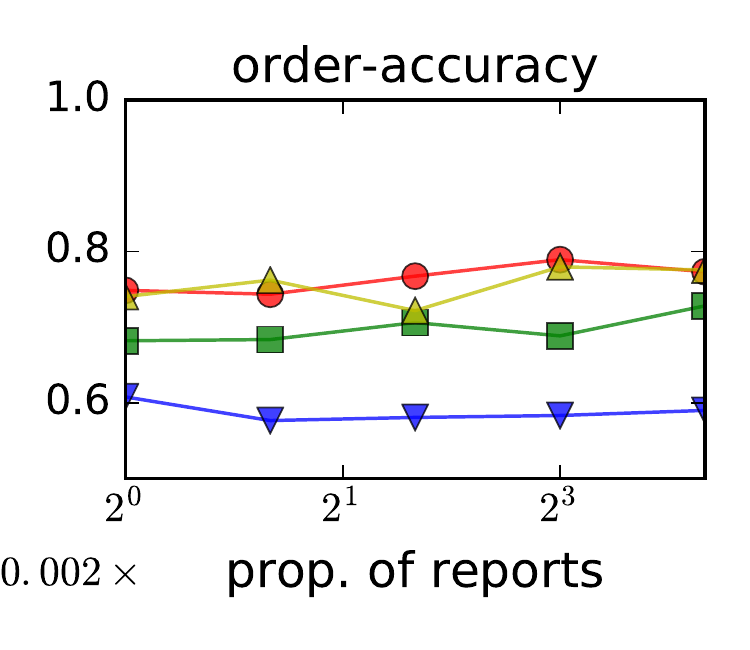}
    \\
    \multicolumn{4}{r}{\includegraphics[width=2.5\figwidth]{figs/si-by_datasets/legend.pdf}}
  \end{tabular}
  \caption{Performance measure on real cascades in \digg.
  }
  \label{fig:digg}
\end{figure*}

\spara{Scalability:}
Last, we evaluate the scalability of \alggreedy. 
We conduct experiments on a 2.5\,GHz Intel Xeon machine with 24\,GB of memory.

First, we consider running time with respect to size of graphs ($\abs{E}$).
We generate synthetic Barab\'{a}si-Albert graphs with exponentially increasing sizes.
Fraction of reports is fixed to 10\%. 
The result is shown on the left-side of Figure~\ref{fig:scalability}.
On the right side of the figure, we consider running time with respect to the fraction of reports on graph \arx. 
Both plots demonstrate \alggreedy scales roughly linearly with respect to either $\abs{E}$ or proportion of report.

\begin{figure}[t]

\setlength{\figwidth}{4.0cm}
\setlength{\tabcolsep}{0pt}
\begin{tabular}{rr}
  \includegraphics[width=\figwidth]{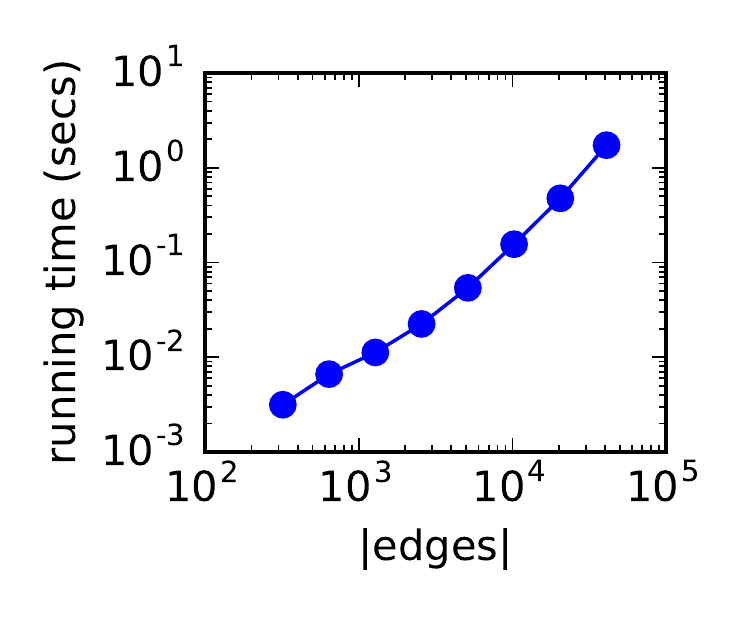} &
  \includegraphics[width=\figwidth]{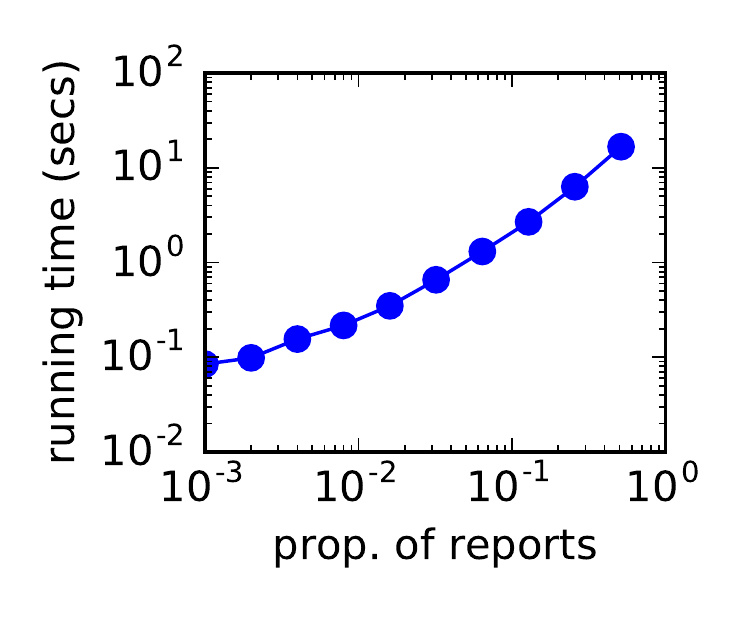} \\
\end{tabular}
\caption{Running time \alggreedy as a function of $\abs{E}$ (left figure) and fraction of reports (right figure). }
\label{fig:scalability}
\end{figure} 

\section{Conclusion}
\label{concl}

We introduce a new formulation for the cascade-reconstruction problem, 
based on a variant of the Steiner-tree problem---as 
it is common, the goal is to find a tree that spans all reported active nodes. 
The novelty of our approach is to effectively utilize temporal information of observations, 
namely, activation times.
To account for the available temporal information
we introduce temporal-consistency constraints,
requiring that all paths in the discovered tree 
should preserve the order of the observed timestamps.
For the proposed Steiner-tree problem
we present three approximation algorithms, 
which provide a trade-off between quality guarantee and scalability. 
The most efficient algorithm has linear\-ithmic running time, 
and thus, it is able to cope with very large graphs and large number of reported active nodes.

Our works opens interesting directions for future research. 
The main open problem is to close the gap
between the approximation algorithms and inapproximability lower bounds. 
Another interesting direction is to consider 
a different objective function
so as to improve the recall of the reconstructed cascade
without significant harm on precision.

\spara{Acknowledgments.}
This work has been supported by the 
Academy of Finland projects ``Nestor'' (286211),
``Agra'' (313927), and ``AIDA'' (317085), 
and the EC H2020 RIA project ``SoBigData'' (654024).

{
\balance
\bibliographystyle{IEEEtran}
\bibliography{reference}
}

\clearpage
\appendix

\section{Proofs of statements related to the problem definition}

\begin{observation}
\label{observation:root}
The optimal solution to the \Ourproblem\ problem
is a tree rooted at the reported node $(u_0, \tmin)\in \rep$. 
If there is more than one node in the set $\nodes(\tmin)$, 
any of them can be considered as a root for the optimal tree.
\end{observation}

\begin{proof}
Assume a minimal Steiner tree $T$, 
rooted in some node $\seed$.  
Let $\path$ be the path from $u_0$ to $\seed$. 
The reported nodes in $\path$ have all timestamp equal to $\tmin$. 
Let $v$ be a reported node, 
$\pathtwo$ be a path from $v$ to $\seed$, and 
$\pathtwo'$ be the path from $v$ to $u_0$. 
Then $\pathtwo'$ contains a prefix of $\pathtwo$ followed by (a part of) $\path$. 
Since the reported nodes in $\path$ have timestamp equal to $\tmin$, 
it follows that $\pathtwo'$ is an order-respecting path.
This allows us to re-root the tree from $\seed$ to $u_0$
while respecting the order.
\end{proof}

\begin{proposition}
Problem \Ourproblem is \nphard.
\end{proposition}
\begin{proof}
The standard \STproblem problem~\cite{williamson2011design}, 
with an input graph $\graph=(\nodes,\edges)$ and a set of terminal nodes $X$, can
be reduced to the \Ourproblem problem with the same graph $\graph=(\nodes,\edges)$ as the input
and reported nodes $\rep=\{(x,0)\}$, for each $x\in X$.
\end{proof}

\section{Approximation guarantee for \algclosure}

\begin{proposition}
\algclosure returns an order-respecting Steiner tree, which spans $\rep$.
\end{proposition}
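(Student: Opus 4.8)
The plan is to verify, in turn, the three properties bundled into the phrase ``order-respecting Steiner tree which spans $\rep$'': that the returned $\tree$ (i) contains every node of $\nodes(\rep)$, (ii) is genuinely a tree (connected and acyclic), and (iii) has every path starting at \seed order-respecting. Throughout I would exploit two structural facts guaranteed by \algclosure. First, $A$ is a spanning arborescence of $H$ rooted at \seed, and since $H$ was built with arc set $X=\{(u,v): t(u)\le t(v)\}$, timestamps are non-decreasing along every root-to-node chain of $A$. Second, each arc $(u,v)$ of $A$ is realised in \graph by the excluding shortest path $\exclpath(u,v)$, whose interior contains \emph{no} reported node.

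For (i) and (ii) I would argue by induction on the iterations of the for-loop, maintaining the invariant that after each iteration $\tree$ is a subtree of \graph rooted at \seed. Spanning is immediate: the loop ranges over all $u\in\nodes(\rep)$ and each iteration adds a path terminating at $u$, so every reported node is eventually inserted. For the tree property, when $u$ is processed let $w$ be its closest ancestor in $A$ that already lies in $\tree$; by minimality of $w$, every reported node strictly between $w$ and $u$ on the $A$-path is still absent from $\tree$, so splicing in the corresponding excluding shortest paths and attaching the result at $w$ adds a fresh branch, preserving connectivity and acyclicity.

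For (iii) the key lemma I would prove is that ancestry among reported nodes is the same in $\tree$ as in $A$: since $u$ is always hung beneath an $A$-ancestor $w$, and the excluding shortest paths contribute only non-reported interior vertices, the reported nodes met along any \seed-rooted path of $\tree$ form exactly a root-to-node chain of $A$, listed in $A$-order. Combined with the first structural fact, the timestamps of these reported nodes are non-decreasing, which is precisely the requirement that every prefix ending at a reported node be order-respecting. This settles (iii) and completes the argument.

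The step I expect to be the main obstacle is the acyclicity half of (ii). The reported endpoints of a newly added path are guaranteed fresh, but two distinct excluding shortest paths may share \emph{non-reported} interior vertices, which could in principle close a cycle with the part of $\tree$ already built. I would handle this by reading the instruction ``add path from $u$ to $w$'' as attaching the path at its first point of contact with the current $\tree$ (equivalently, discarding any already-present suffix), so that only genuinely new vertices and edges are inserted; under this reading the invariant is maintained and the subtlety disappears.
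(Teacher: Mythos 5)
Your decomposition into (i) spanning, (ii) acyclicity, and (iii) order preservation matches the structure of the paper's proof, and you correctly isolate the one genuine subtlety: two excluding shortest paths may share Steiner (non-reported) interior vertices. The paper's own proof glosses over this by asserting that each newly added path ``contains only one node from $\tree$,'' which under the literal reading of the algorithm can be false; your first-point-of-contact reading is a sound way to secure acyclicity, and flagging this is a strength of your proposal. The gap is in (iii): the key lemma you invoke --- that the reported nodes along any \seed-rooted path of $\tree$ form a root-to-node chain of $A$ --- is false under precisely the reading you adopt in (ii). Once a new path may attach at its first contact vertex $z$, and $z$ may be a Steiner vertex lying on a branch built for a terminal that is unrelated to $u$ in $A$, the terminal $u$ ends up hanging below reported nodes that are \emph{not} its $A$-ancestors. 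Concretely: let $A$ contain arcs $\seed\to a$, $a \to b$, $\seed\to c$ with $t(a)<t(b)<t(c)$, let the path realizing $a\to b$ pass through a Steiner vertex $y$, and let the path realizing $\seed\to c$ also pass through $y$. After $a$ and $b$ are processed, $\tree$ contains $\seed - a - y - b$; when $c$ is processed its first contact is $y$, so the \seed-to-$c$ path in $\tree$ is $\seed - a - y - c$ and passes through $a$, which is not an $A$-ancestor of $c$. Hence your derivation of (iii) from the arc orientation of $H$ breaks down.

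The conclusion survives, but for a different reason --- the one the paper actually uses. Because terminals are processed in chronological order, at the moment $u$ is inserted every reported node already in $\tree$ has timestamp at most $t(u)$, and the newly attached path contains no reported vertices in its interior except possibly arc-junction terminals tied at timestamp exactly $t(u)$. Therefore every \seed-rooted path ending in the new branch meets only reported nodes of timestamp at most $t(u)$, while all previously existing \seed-to-terminal paths are unchanged by grafting a leaf branch; so the invariant ``every path in $\tree$ from \seed to a reported node is order-respecting'' is maintained, under \emph{either} reading of the attachment step (in my example above, $t(a)<t(c)$ necessarily, since $a$ was processed earlier). Replacing your ancestry lemma with this chronological argument closes the gap and yields a proof that, unlike the paper's two-line version, also handles the shared-Steiner-vertex issue honestly.
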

\begin{proof}
First, by the for-loop defined in lines 5--9,
the resulting tree $T$ is a subgraph of $\graph$, which spans all reported nodes.
For each new reported node $u$ processed in the for-loop,
a new path $P$ is added in the subgraph \tree,
which contains only one node from $\tree$, and thus $\tree$ is indeed a tree.
Moreover, $P$ may contain only the largest time stamps in current $\tree$.
This makes $\tree$ an order-respecting Steiner tree.
\end{proof}

\begin{proposition}
\label{prop:closureapprox}
The \algclosure algorithm provides
approximation guarantee $2(1+\sqrt{3/2})\sqrt{\nrep - 1}$ for problem \Ourproblem.
\end{proposition}

To prove this result we first need to prove couple of technical lemmas.
Let us write $c = 1 + \sqrt{3/2}$ for brevity.

\begin{lemma}
\label{lem:monosplit}
Assume a sequence of $n$ numbers. 
It is possible to partition this sequence in at most $c\sqrt{n}$ monotonic subsequences
(increasing or decreasing).
\end{lemma}

Note that we are {\em not} referring to {\em consecutive} subsequences.
For example, $(5,7,9)$ is a monotonically increasing subsequence of 
the sequence $(5,8,2,7,1,9,6)$.

\begin{proof}
We will prove the lemma using induction on $n$. The result holds for $n = 1, 2, 3, 4$.
Assume $n \geq 5$, and assume that lemma holds for any sequence of length less than $n$.
Erd\H{o}s-Szekeres theorem~\cite{erdos1935combinatorial} 
states that there is a monotonic subsequence whose length is at least
$\floor{\sqrt{n - 1}} + 1$. 
By removing that monotonic subsequence we are left with a sequence having at most 
$n-\floor{\sqrt{n - 1}} - 1$ numbers.
By the induction hypothesis this reduced sequence can be 
partitioned in at most $c\sqrt{n - \floor{\sqrt{n - 1}} - 1}$ monotonic subsequences.
Thus, it is enough to prove that
\[
	c\sqrt{n - \floor{\sqrt{n - 1}} - 1} + 1 \leq c\sqrt{n}.
\]
To prove the claim note that
\[
\begin{split}
	0 &= 2 c(c - 2) - 1 \\
	&\leq \sqrt{n - 1}c(c - 2) - 1 \\
	&= c^2\sqrt{n - 1} - 2c\sqrt{n - 1} - 1 \\
	&\leq c^2\sqrt{n - 1} - 2c\sqrt{n - \sqrt{n - 1}} - 1 \\
	&= c^2n - c^2(n - \sqrt{n - 1}) - 2c\sqrt{n - \sqrt{n - 1}} - 1 \\
	&= c^2n - \pr{1 + c\sqrt{n - \sqrt{n - 1}}}^2.
\end{split}
\]
That is,
\[
	\pr{1 + c\sqrt{n - \sqrt{n - 1}}}^2 \leq c^2n,
\]
or
\[
	c\sqrt{n - \sqrt{n - 1}} + 1 \leq c\sqrt{n}.
\]
Since $\sqrt{n - 1} \leq \floor{\sqrt{n - 1}} + 1$, the lemma follows.
\end{proof}

The next two lemmas show that there \emph{exists} a directed tree in $H$ with
total weight, say $w$, less than $2c\sqrt{k - 1}$ times the number of edges in
the solution. This immediately proves Proposition~\ref{prop:closureapprox}
since that the total weight of the minimum spanning tree $A$ in \algclosure is less or equal than $w$,
and \algclosure returns a tree with number of edges bounded by the total weight of
$A$.

The first lemma establishes the bound when the reported nodes are leaves or a root, 
while the second lemma proves the general case 
when the reported nodes can be also internal nodes in the tree.

\begin{lemma}
\label{lem:puretree}
Let $\graph=(\nodes, \edges)$ be a graph and let $\rep$ be a set of reported nodes.
Consider a tree $\tree = (W, F)$, with $W\subseteq\nodes$, whose root and leaves are 
\emph{exactly} the reported nodes $\nodes(\rep)$, with the root having the smallest timestamp.
Then there is a directed tree $U = (\nodes(\rep), B)$ 
such that $(u, v) \in B$ implies $t(u) \leq t(v)$,
and
\[
\sum_{(u, v) \in B} \excldist(u, v) \leq 2 c\sqrt{\abs{\rep} - 1} \abs{F}.
\]
\end{lemma}

\begin{proof}
Write $k = \abs{\rep}$, and let $u_1, \ldots, u_{k - 1}$ be the leaves in $\nodes(\rep)$, ordered
based on a Eulerian tour. Let $\seed = u_k$ be the root of~$T$.
Consider \emph{any} subsequence of $w_1, \ldots, w_\ell$ of $u_1, \ldots, u_{k - 1}$.
Then
\[
	\excldist(\seed, w_1) + \sum_{i = 1}^{\ell - 1} \excldist(w_i, w_{i + 1}) \leq 2\abs{F},
\]
since each $\excldist(w_i, w_{i + 1})$ is upper-bounded by a path in the Euler tour
and we visit every edge in $\abs{F}$ at most twice during the tour. We can also reverse the direction
of the tour and obtain
\[
	\excldist(\seed, w_\ell) + \sum_{i = 1}^{\ell - 1} \excldist(w_{i + 1}, w_{i}) \leq 2\abs{F}.
\]
	
To create the tree $U$, start with a tree containing only the root $\seed$.
According to Lemma~\ref{lem:monosplit} we can partition $u_1, \ldots, u_{k - 1}$ to at
most $c\sqrt{k - 1}$ sequences such that the timestamps of nodes in each
subsequence is either increasing or decreasing. Let $w_1, \ldots, w_\ell$ be such
subsequence. If the time stamps are increasing, then add a path $\seed, w_1, \ldots, w_\ell$.
If the time stamps are decreasing, then add a path $\seed, w_\ell, \ldots, w_1$.
Repeat this at most $c\sqrt{k - 1}$ times.

The total distance weight of each path is $2\abs{F}$ and there are at most $c\sqrt{k - 1}$
paths, proving the result.
\end{proof}

The second lemma allows reported nodes to be non-leaves. 

\begin{lemma}
Let $\graph=(\nodes, \edges)$ be a graph and let $\rep$ be a set of reported nodes.
Consider a tree $\tree = (W, F)$, with $W\subseteq\nodes$, solving \Ourproblem.
Then there is a directed tree $U = (\nodes(\rep), B)$ such that $(u, v) \in B$ implies $t(u) \leq t(v)$,
and
\[
	\sum_{(u, v) \in B} \excldist(u, v) \leq 2 c\sqrt{\abs{\rep} - 1} \abs{F}.
\]
\end{lemma}

\begin{proof}
We will prove this by induction over $\abs{V}$.
If there are no intermediate reported nodes in $T$, we can apply Lemma~\ref{lem:puretree} directly.
Assume there is at least one, say $v$, intermediate reported node.  Let $T_1$ be the tree corresponding
to the branch rooted at $v$. Let $T_2$ be the tree obtained from $T$ by deleting $T_1$,
but keeping $v$. Let $\rep_1$ be the reported nodes in $T_2$ and let $\rep_2$ be the reported nodes in $T_2$.
Apply Lemma~\ref{lem:puretree} to $T_1$ and $T_2$, obtaining $U_1 = (\nodes(\rep_1), B_1)$ and $U_2 = (\nodes(\rep_2), B_2)$, respectively.
We can join these two trees at $v$ to obtain a joint tree $U$. This tree respects the time constraints. 
Moreover,
\[
\begin{split}
	\sum_{(u, v) \in B} \excldist(u, v) & = \sum_{(u, v) \in B_1} \excldist(u, v) + \sum_{(u, v) \in B_2} \excldist(u, v) \\
	& \leq 2 c\sqrt{\abs{\rep_1} - 1} \abs{F_1} + 2 c\sqrt{\abs{\rep_2} - 1} \abs{F_2} \\
	& \leq 2 c\sqrt{\abs{\rep} - 1} (\abs{F_1} + \abs{F_2}) \\
	& = 2 c\sqrt{\abs{\rep} - 1} \abs{F}, 
\end{split}
\]
proving the result.
\end{proof}

\section{Approximation guarantee of \alggreedy}

\begin{proposition}
Algorithm \alggreedy yields a $\nrep$-approximation guarantee for the \Ourproblem problem.
\end{proposition}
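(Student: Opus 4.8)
The plan is to charge the cost of the greedy tree $\tree$ to a fixed optimal order-respecting Steiner tree $\tree^\ast$ rooted at \seed, one reported node at a time. Write $\cost(\cdot)$ for the number of edges of a tree, let $\cost(\tree^\ast)$ be the optimum, and note that \seed lies in $\tree$ from the initialization $\tree \define (\{\seed\},\emptyset)$. The whole argument reduces to a per-iteration bound: when \alggreedy processes a reported node $u$ and selects $v = \arg\min_{z \in \tree} \extexcldist(z,u)$, I claim $\extexcldist(v,u) \le \cost(\tree^\ast)$. Granting this claim, the result follows at once, since adding $\extexclpath(v,u)$ contributes at most $\extexcldist(v,u)$ new edges, so $\cost(\tree) \le \sum_{u \in \nodes(\rep)} \extexcldist(v,u) \le \nrep\,\cost(\tree^\ast)$.

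To establish the per-iteration bound I would first take the path $P^\ast$ from \seed to $u$ inside $\tree^\ast$. Since $\tree^\ast$ is order-respecting, every reported node on $P^\ast$ has timestamp at most $t(u)$, and clearly $\abs{P^\ast} \le \cost(\tree^\ast)$. Next I would locate the vertex $x$ on $P^\ast$ closest to $u$ that is either \seed or a reported node of timestamp strictly smaller than $t(u)$ (if no such reported node exists, then $x = \seed$). Because \alggreedy processes reported nodes in chronological order, every reported node of timestamp below $t(u)$ has already been inserted into $\tree$; together with \seed being present from the start, this guarantees $x \in \tree$ at the current iteration.

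The final step is to observe that the sub-path of $P^\ast$ from $x$ to $u$ is a feasible \emph{extended excluding path}: by the maximality in the choice of $x$, every reported node strictly between $x$ and $u$ has timestamp at least $t(u)$, hence exactly $t(u)$ by order-respecting, so all interior reported nodes share $u$'s timestamp. Therefore $\extexcldist(x,u)$ is bounded by the length of this sub-path, which is at most $\abs{P^\ast} \le \cost(\tree^\ast)$; and since $x \in \tree$, minimality of $v$ yields $\extexcldist(v,u) \le \extexcldist(x,u) \le \cost(\tree^\ast)$, proving the claim.

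The hard part will be precisely this per-iteration bound, and the subtlety lies in not over-reaching. The tempting shortcut of bounding $\extexcldist(\seed,u)$ directly by the optimal root-to-$u$ path fails, because that path may pass through reported nodes of strictly smaller timestamp, which are forbidden as interior vertices of an extended excluding path. The crux is to restart the accounting at the last such earlier reported node $x$ — guaranteed to be in $\tree$ by chronological processing — so that only the remaining stretch to $u$ must be charged, and that stretch is a legal extended excluding path.
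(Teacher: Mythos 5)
Your proof is correct and takes essentially the same route as the paper's: charge each iteration of \alggreedy against the optimum via the order-respecting root-to-$u$ path in a fixed optimal tree $\tree^*$, yielding $\cost(\tree) \le \nrep\,\cost(\tree^*)$. If anything, your write-up is more complete than the paper's, which merely asserts the per-iteration inequality $\extexcldist(v,u)\le\cost(\path(\seed,u))$, whereas you supply its justification---restarting the accounting at the last reported node of strictly earlier timestamp on the optimal path, which chronological processing guarantees already lies in \tree, and observing that the remaining stretch is a feasible extended excluding path.
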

\begin{proof}
Consider an input graph $\graph=(\nodes,\edges)$, 
a set of reported nodes \rep, and seed \seed. 
Let $\tree^*$ be the optimal order-respecting tree that covers all reported nodes in $\nodes(\rep)$, 
and $\tree$ be the tree returned by \alggreedy.
Let $\cost(\tree^*)$ be the cost of optimal tree $\tree^*$, 
and $\cost(\tree)$ the cost of \tree.
Let $\path(\seed, u)$ be the shortest order-respecting path from \seed to $u$ in \graph.
For each reported node $u$ in $\nodes(\rep)$ 
we have $\cost(\path(\seed, u))\leq \cost(\tree^*)$. 
In each iteration of \alggreedy we add a path $\extexclpath(v,u)$ with 
$\extexcldist(v,u)\leq\cost(\path(\seed, u))$. 
It follows
\[
\cost(\tree) \leq \sum_{u\in\nodes(\rep)} \extexcldist(v,u) \leq 
\sum_{u\in\nodes(\rep)} \cost(\path(\seed, u)) \leq \nrep\,\cost(\tree^*).
\]
\end{proof}

\section{Approximation guarantee of \algmst}

\begin{proposition}
Algorithm \algmst yields a $\nrep$-appro\-xi\-ma\-tion guarantee for the \Ourproblem problem.
\end{proposition}

\begin{proof}
Let $\tree$ be the tree discovered by the algorithm \algmst.
We order the nodes in $\nodes(\rep)$ based on the visiting by the algorithm, 
$\nodes(\rep) = s_1, \ldots, s_k$, 
We write $\tree_i$ for the (sub)tree
that has been formed right after $s_i$ is added to the queue $Q$.
We use $p_i$ to denote the depth of the tree $\tree_i$.

Let $T^*$ be the optimal steiner tree, 
and let $T^*_i$ be a subtree of $T^*$ 
containing only branches with leaves from $s_1, \ldots, s_i$.

We claim that $p_i \leq \abs{E(T^*_i)}$. 
The proposition follows immediately from this
claim since 
\[
	\abs{E(T)} \leq \sum_{i = 1}^k p_i \leq \sum_{i = 1}^k \abs{E(T^*_i)} \leq k \abs{E(T^*)}.
\]
To prove the claim, we use induction. The result holds trivially for $i = 1$.
Assume it holds for $i - 1$.
Let $j$ be the largest index such that $t(s_j) < t(s_i)$. If no index exist,
then set $j = 1$. 

Let $q$ be the length of the path in $T^*$ from $s_i$ to the largest ancestor terminal, say
$s_\ell$, that is either a root or have a genuinely smaller time stamp.

By induction $p_j \leq \abs{E(T^*_j)}$, and since $s_\ell \in V(T^*_j)$, 
the tree $\tree_j$ can only grow by $q$ in depth before we visit $s_i$,
that is, $p_i \leq p_j + q \leq \abs{E(T^*_j)} + q$. Note that $T^*_j$
does not contain the path from $s_i$ to $s_\ell$. This implies that $\abs{E(T^*_j)} + q \leq \abs{E(T^*_i)}$, proving the claim.
\end{proof}


\end{document}